\numberwithin{equation}{section}
\definecolor{equationcolor}{RGB}{222,94,100}
\definecolor{alecolor}{RGB}{198,113,190}
\definecolor{equationcolor}{RGB}{222,94,100}
\DeclareFontFamily{U}{mathx}{}
\DeclareFontShape{U}{mathx}{m}{n}{<-> mathx10}{}
\DeclareSymbolFont{mathx}{U}{mathx}{m}{n}
\DeclareMathAccent{\widecheck}{0}{mathx}{"71}
\def\R{{\mathbb R}}
\def\N{{\mathbb N}}
\def\UT{\mathrm{UT}}
\def\AA{\mathcal{A}}
\def\SS{\mathcal{S}}
\def\de{ \mathrm{d}}
\def\conv{\operatorname{conv}}
\theoremstyle{plain}
\newtheorem{thm}{Theorem}[section]
\newtheorem{cor}[thm]{Corollary}
\newtheorem{prop}[thm]{Proposition}
\newtheorem{lem}[thm]{Lemma}
\newtheorem{ex}{Example}[section]
\newtheorem{mainthm}{Theorem}
\theoremstyle{definition}
\theoremstyle{remark}
\newtheorem{rem}{Remark}
\newcommand{\be}{\begin{equation}}
\newcommand{\ee}{\end{equation}}
\newcommand{\ben}{\begin{equation*}}
\newcommand{\een}{\end{equation*}}
\def\_#1{\def\next{#1}
 \ifx\next\risingsign\expandafter\rising\else^{\underline{#1}}\fi}
\def\risingsign{^}
\def\rising#1{^{\overline{#1}}}
\title[Relative volume of comparable pairs under semigroup majorization]{Relative volume of comparable pairs under semigroup majorization}
\author[F.~D.~Cunden]{Fabio Deelan Cunden}
\address{Dipartimento di Matematica, Universit\'a degli Studi di Bari, I-70125 Bari, Italy, and
INFN, Sezione di Bari, I-70126 Bari, Italy}
\email{fabio.cunden@uniba.it}
\author[J. Czartowski]{Jakub Czartowski}
	\address{Doctoral School of Exact and Natural Sciences, Jagiellonian University, 30-348 Kraków, Poland}
	\address{Faculty of Physics, Astronomy and Applied Computer Science, Jagiellonian University, 30-348 Kraków, Poland}
        \address{School of Physical and Mathematical Sciences, Nanyang Technological University,
21 Nanyang Link, 637371 Singapore, Republic of Singapore}
 \email{jakub.czartowski@ntu.edu.sg}
\author[G.~Gramegna]{Giovanni Gramegna}
\address{Dipartimento di Fisica, Universit\`a degli Studi di Bari, I-70126 Bari, Italy, and
	INFN, Sezione di Bari, I-70126 Bari, Italy}
\email{giovanni.gramegna@uniba.it}
 \author[A. de Oliveira Junior]{A. de Oliveira Junior}
	\address{Center for Macroscopic Quantum States bigQ, Department of Physics,
Technical University of Denmark, Fysikvej 307, 2800 Kgs. Lyngby, Denmark}
\email{aleoli@dtu.dk}
\thanks{The research of FDC and GG is supported by Gruppo Nazionale di Fisica Matematica GNFM-INdAM and by Istituto Nazionale di Fisica Nucleare INFN through the project QUANTUM. FDC acknowledges the support from  PRIN 2022 project 2022TEB52W-PE1-
`The charm of integrability: from nonlinear waves to random matrices'.  GG acknowledges financial support from PNRR MUR project PE0000023-NQSTI and from the University of Bari through the 2023-UNBACLE-0245516 grant. AOJ acknowledges financial support from VILLUM FONDEN through a research Grant (40864) and EU Horizon Europe project QSNP (Grant Agreement No. 101114043).  
    JCz is supported by the start-up grant of the
    Nanyang Assistant Professorship at the Nanyang Technological University in Singapore, awarded to Nelly Ng.}
\begin{document}
 
\maketitle

\begin{abstract}  
Any semigroup $\SS$ of stochastic matrices induces a semigroup majorization relation $\prec^{\SS}$ on the set $\Delta_{n-1}$ of probability $n$-vectors. Pick $X,Y$ at random in $\Delta_{n-1}$: what is the probability that $X$ and $Y$ are comparable under $\prec^{\SS}$? We review recent asymptotic ($n\to\infty$) results and conjectures in the case of \emph{majorization} relation (when $\SS$ is the set of doubly stochastic matrices), discuss natural generalisations, and prove a new asymptotic result in the case of majorization, and new exact finite-$n$ formulae in the case of \emph{UT-majorization} relation, i.e.\ when $\SS$ is the set of upper-triangular stochastic matrices.
\end{abstract}

\section{Introduction}

The subject of combinatorial and algebraic aspects of partially ordered
sets has received much attention in recent decades in quantum physics, with an explosion
of works in the field, since the (re)discovery of the interplay between certain resource theories
and order relations, most notably between entanglement theory and majorization~\cite{nielsen2001majorization}. See~\cite{Du2015,chitambar2016comparison,RevModPhys.89.041003,deoliveirajunior2022} and references therein.
The basic idea of a (quantum) resource theory
is to study information processing under a restricted set of operations,  called \emph{free}. 
Thus, a resource theory leads to a partial ordering of states, ranking them from harder to easier to prepare by means of free operations.  
This notion of partial order, introduced in physics by Uhlmann~\cite{alberti1982stochasticity}, has deep mathematical roots dating back to Schur, Hardy, Littlewood, P\'olya, and others. 
\par

Consider a finite system whose states can be described by $n$-dimensional probability vectors $x=(x_1,\ldots,x_n)$, $x_i\geq0$, $\sum_ix_i=1$.  These states belong to the probability simplex $\Delta_{n-1}$ referred to as the \textbf{state space}. Stochastic matrices represent the most general linear transformations between states in~$\Delta_{n-1}$. Formally, a {resource theory}~\cite{chitambar2019quantum} can be constructed by declaring that the set of {\textbf{free operations}} between states in $\Delta_{n-1}$ is a given proper semigroup $\mathcal{S}$ of stochastic matrices. The set of {free operations} $\mathcal{S}$ leads to an order relation in the state space: we write $x\prec^{\mathcal{S}} y$ whenever $x=yP$ for some $P\in\mathcal{S}$, that is if the state $y$ can be converted into $x$ by means of a {free operation}. In this paper, we adopt the right (row) stochastic matrix convention, where stochastic matrices act on the right on row probability vectors. This is standard in probability theory, where stochastic matrices are also called Markov matrices (see Section~\ref{sec:background}).  The standard majorization relation~\cite{MOA} is obtained by choosing $\mathcal{S}$ as the semigroup of doubly stochastic matrices. An equivalent characterization of majorization (and other semigroup majorization) is in terms of monotonicity of family of functions that deserve the name of `generalised entropies'~\cite{renyi1961proceedings}. In some situations the family of generalised entropies can be generated by nonnegative linear combinations of a finite family of `monotone' functions.

Given two states $x,y$, for any {free operations} $P\in \mathcal{S}$, one can certainly decompose $yP$,  as a convex combination $px+(1-p)z$ for some $z\in\Delta_{n-1}$ (with $p$ that depends on $P$). The trivial case $p=1$ corresponds the case $yP=x$, i.e., $x\prec^{\mathcal{S}}y$. For general $x,y$ it is natural to ask what is the largest value of $p$, denoted with $\Pi_{\mathcal{S}}(x,y)$, that can be achieved in a transformation $yP=px+(1-p)z$, as $P$ runs over all {free} transformations. In the context of resource theories $\Pi_{\mathcal{S}}$ can be thought of as the \emph{maximal success probability of conversion} of $y$ into $x$~\cite{vidal1999entanglement,Du2015,zhu2017operational,oszmaniec2019operational}. This happens when a quantum measurement is used in the conversion protocol, which allows one to extract $x$ from the convex combination $px+(1-p)z$ conditionally on  certain outcomes of the measurement, whose occurrence probability sum to $p$. However, it is worth mentioning that measurements may not be free operations in a given quantum resource theory~\cite{alhambra2016fluctuating}.
\par
 
It is clear that in a resource theory, a state $y$ is `useful' (or `more useful than others') if from $y$ it is possible to reach `many' other states by {free operations}. For example, in entanglement theory maximally entangled states of a bipartite system can be converted to \emph{every} other state using local operations and classical communication only, and this is why such special states (like Bell states), sometimes referred to as `golden state', are key ingredients in applications. But such golden
states are very special.
\par

We therefore address the following question:
{given a fixed resource theory, how}
likely is it that two `generic' states can be connected via a free operation in $\mathcal{S}$? More precisely, pick $X$ and $Y$ at random according to a (natural) probability measure on the state space $\Delta_{n-1}$: what is the probability that $X\prec^{\mathcal{S}} Y$? Our {initial} intuition is that in the limit of large dimensionality $n\to\infty$,  typical pairs of states are so uncorrelated, that {they} cannot be converted to each other via free operations and hence are incomparable. This \textit{`comparability is atypical'} statement was first put forward by Nielsen~\cite{Nielsen1999} in quantum theory and was only recently proved in~\cite{Cunden2021} and in~\cite{Jain24} for the resource theories of quantum coherence and quantum entanglement, respectively. Given that exact conversion between generic states is typically impossible to achieve by means of free operations, one can consider transformations that succeed with some probability $p$ (not necessarily equal to $1$) at transforming the initial state into the desired final state.  If we allow nondeterministic transformations (i.e., some probability of failure), what is the typical maximal probability of successful transformation $\Pi_{\mathcal{S}}(X,Y)$ for random states $X$ and $Y$?
\par
\subsection{Majorization} Let $\prec$ be the majorization relation. This is the semigroup majorization corresponding to the set of doubly stochastic matrices (precise definitions are given later). Let also $\Pi(x,y)$ be the largest value $p\in[0,1]$ such that $px+(1-p)z\prec y$ for some $z\in\Delta_{n-1}$.
\begin{mainthm}
\label{thm:intro_maj}
Pick $X,Y$ independently and uniformly at random in $\Delta_{n-1}$. Then,
\be
\label{eq:main1_maj}
\lim_{n\to\infty}P\left(X\prec Y \right)=0,
\ee
and $\Pi(X,Y)$ converges in probability of $1$: for all $\epsilon>0$,
\be
\label{eq:main2_maj}
\lim_{n\to\infty}P(\Pi(X,Y)< 1-\epsilon)=0.
\ee
\end{mainthm}
The limit~\eqref{eq:main1_maj} is a restatement of a claim proved in~\cite{Cunden2021}, while the convergence~\eqref{eq:main2_maj} is a new result (Theorem~\ref{thm:limit} below). 
Here is an informal rephrasing: the proportion of pairs of states $(x,y)$ such that $x$ is majorised by $y$ tends to $0$ for large $n$; nevertheless in any $\epsilon$-neighbourhood of $x$, if $n$ is large, there is a vector $x_{\epsilon}$ that is majorized by $y$. The result may, at first,  feel paradoxical. A way to
reconcile these facts is to observe that~\eqref{eq:main1_maj}-\eqref{eq:main2_maj} show that
exchanging limits  does not work here: for fixed $\epsilon>0$, the $n\to\infty$ limit yields $1$, but if
one first takes $\epsilon\to0$ and then $n\to\infty$, then the probability becomes zero.
\subsection{Upper-triangular majorization} Let $\prec^{\UT}$ be the semigroup majorization relation corresponding to  the semigroup $\mathcal{S}^{\UT}$ of upper-triangular stochastic matrices (UT-majorization). Let also $\Pi_{\UT}(x,y)$ be the largest value $p\in[0,1]$ such that $px+(1-p)z\prec^{UT}y$ for some $z\in\Delta_{n-1}$.
In physics, the semigroup $\mathcal{S}^{\UT}$  arises in the zero-temperature limit of thermal operations~\cite{narasimhachar2015low}.
\begin{mainthm}
\label{thm:intro}
Pick $X,Y$ independently and uniformly at random in $\Delta_{n-1}$. Then,
\be
\label{eq:main1}
P\left(X\prec^{\UT}Y\right)=\frac{1}{n}.
\ee
Moreover, the distribution function of $\Pi_{\UT}(X,Y)$ is
\be
\label{eq:main2}
P(\Pi_{\UT}(X,Y)\leq t)=
\begin{cases}
0&\text{if $t\leq0$,}\\
\left(1-\dfrac{1}{n}\right)t&\text{if $0<t<1$,}\\
1&\text{if $t\geq1$}.
\end{cases}
\ee
\end{mainthm}
A more precise statement of~\eqref{eq:main1}  (for more general measures $\mu_n$ on $\Delta_{n-1}$) and~\eqref{eq:main2} is given in Theorems~\ref{thm:VanishingProb} and~\ref{thm:cumulativeDist}  later on. Eq.~\eqref{eq:main1} implies the claimed `comparability is atypical' statement: the probability that two random states can be connected by a upper-triangular stochastic matrix vanishes in the limit $n\to\infty$. Eq.~\eqref{eq:main2} addresses the question of `nondeterministic convertibility'. 
\par

We stress that Theorem~\ref{thm:intro} is an exact finite-$n$ result, and in this sense the model of comparability of random vectors with respect to UT-majorization can be considered `exactly solvable':  explicit calculations are possible and might give insights to understand the asymptotics of other, non-solvable models. 

\subsection{Outline of the paper} In the next section, we give some background on partial orders, semigroup majorization, and orderings induced by function sets. In Section~\ref{sec:comp}, we formulate the problem, recall known asymptotic results for majorization (Theorems~\ref{thm:CFFG21} and~\ref{thm:JKM24}) and state the first main theorem on UT-majorization,  Theorem~\ref{thm:VanishingProb}, which gives an exact universal formula for the probability that two random states $X$ and $Y$ are comparable with respect to UT-majorization. In Section~\ref{sec:weak} we consider the maximal success probability of conversion of random states $X$ and $Y$. In the case of majorization and UT-majorization relations, we first present and prove an explicit formula for $\Pi_{\mathcal{S}}(\cdot,\cdot)$ involving ratios of monotone functions (Propositions~\ref{prop:VidalMaj} and~\ref{prop:VidalUT}). In the majorization relation, one expects the existence of a limit in distribution of the random variable $\Pi_{\mathcal{S}}(X,Y)$. We prove that this is indeed the case and we find the limit distribution to be concentrated at $1$ (Theorem~\ref{thm:limit}). For UT-majorization, we find the exact finite-$n$ distribution function (Theorem~\ref{thm:cumulativeDist}), which, in fact, has a nontrivial limit. The final section contains the proofs of the main results.
We included some technical results in the appendices~\ref{app:Bolshev} (Bolshev's recursion) and~\ref{app:stick} (stick-breaking representation of the Dirichlet distribution). 

\section{Background}
\label{sec:background}
 We begin by recalling some terminology. The reader is referred to the monograph of Marshall, Olkin, and Arnold~\cite{MOA} for further details.

\subsection{Order relations}  
A \textbf{partial order} is a pair $\left(\AA,\sqsubseteq\right)$, where $\AA$ is a set and $\sqsubseteq$ a binary relation on $\AA$ satisfying
\begin{enumerate}
\item \textbf{reflexivity:} for all $x\in\AA$, $x\sqsubseteq x$.
\item \textbf{transitivity:} for all $x,y,z\in\AA$, if $x\sqsubseteq y$ and $y\sqsubseteq z$, then $x\sqsubseteq z$.
\item \textbf{antisymmetry:} for all $x,y\in\AA$, if $x\sqsubseteq y$ and $y\sqsubseteq x$, then $x=y$.
\end{enumerate}
A \textbf{preorder} is a pair $\left(\AA,\sqsubseteq\right)$  satisfying the first two conditions. If $x\sqsubseteq y$ we say that `$x$ {is less than} $y$'. 
If it exists, a \textbf{bottom element} of a partial order, is an element $\bot\in\AA$ satisfying $\bot\sqsubseteq x$ for all $x\in\AA$. If it exists, a \textbf{top element} of a partial order, an element $\top\in\AA$ satisfying $x\sqsubseteq \top$ for all $x\in\AA$. If neither $x\sqsubseteq y$ nor $y\sqsubseteq x$, we say that $x$ and $y$ are \textbf{incomparable}.
A \textbf{total order} is a partial order $\left(\AA,\sqsubseteq\right)$ in which for every pair of elements $x,y$ either $x\sqsubseteq y$ or $y\sqsubseteq x$ holds. The \textbf{dual} (or converse) relation $\left(\AA,\sqsubseteq^*\right)$, is defined by setting $x \sqsubseteq^* y$ if and only if $y \sqsubseteq x$. 

For a given $y\in\AA$, the set of all \textbf{antecedents} of $y$ in the relation $\sqsubseteq$ is denoted
by $\gamma(y,\sqsubseteq)$. Thus
   \be
   \gamma(y,\sqsubseteq)=\{x\in\AA\colon x\sqsubseteq y\}.
   \ee
      The \textbf{graph} of the relation $\sqsubseteq$ on $\AA$ will be denoted by $\Gamma(\AA,\sqsubseteq)$. Thus
   \be
\Gamma(\AA,\sqsubseteq) = \{(x,y)\in \AA\times \AA\colon x\in\gamma(y,\sqsubseteq)\}= \{(x,y)\in \AA\times \AA\colon x\sqsubseteq y\}.
\ee
We will simply write $\Gamma$ when $\left(\AA,\sqsubseteq\right)$  is clear from the context.

\subsection{Semigroup majorization}
 For $x\in\R^n$, we denote by $x^\downarrow$ the permutation of $x$ with the components arranged in nonincreasing order, i.e., $x_1^\downarrow \geq x_2^\downarrow \geq \cdots \geq x_n^\downarrow$. For $x,y\in\R^n$, we say that $x$ is \emph{majorized} by $y$ (or $y$ majorizes $x$) and write $x\prec y$ if
   \be
   \begin{cases}
 \displaystyle  \sum_{i=1}^{k} x^{\downarrow}_i\leq    \sum_{i=1}^{k} y^{\downarrow}_i, &\text{for $k=1,\ldots,n-1$}\\\\
\displaystyle   \sum_{i=1}^{n} x^{\downarrow}_i=    \sum_{i=1}^{n} y^{\downarrow}_i.
   \end{cases}\label{eq:majorization}
   \ee
    If the equality in \eqref{eq:majorization} is replaced by a corresponding inequality, we say that $x$ is \emph{weakly  (sub)majorized} by $y$ and write $x\prec_w y$, i.e., for $x,y\in\R^n$:
   \be
   x\prec_{w} y\quad\text{if}\quad
 \displaystyle  \sum_{i=1}^{k} x^{\downarrow}_i\leq    \sum_{i=1}^{k} y^{\downarrow}_i, \quad\text{$k=1,\ldots,n$}.
   \ee
  
    It can be seen~\cite{Malamud} that   $x\prec y$ if and only if, for $j =1,\ldots,n$,
   \begin{multline}
   \label{eq:conv_hull_i}
   \conv\{x_{i_1}+\cdots+x_{i_j}\colon 1\leq i_1< \cdots < i_j\leq n\}
 \subset     \conv\{y_{i_1}+\cdots+y_{i_j}\colon 1\leq i_1<\cdots< i_j\leq n\},
   \end{multline}
where $\conv(A)$ denotes the convex hull of a set $A$ (the smallest convex set containing $A$).
\par

   By a celebrated result by Hardy, Polya and Littlewood~\cite{HLP}, $x \prec y$ if and only if $x=yD$ for some doubly stochastic matrix $D$. This relation is a preorder because the set of $n\times n$ doubly stochastic matrices contains the identity and is closed under multiplication. Other preorders are obtained if the set of doubly stochastic matrices are replaced by another semigroup $\SS$ of matrices with identity~\cite{MOA}.
\par

Let $\AA$ be a subset of $\R^n$ and let $\SS$ be a semigroup of linear transformations (matrices) mapping $\AA$ to $\AA$. A vector  $x\in\AA$  is said to be \textbf{semigroup majorized} by $y$, written $x\prec^{\SS} y$, if $x = yS$ for some $S\in \SS$. From the semigroup property of $\SS$, it follows that $\left(\AA,\prec^{\SS}\right)$ is a preorder, but not a partial order in general. See~\cite{joe1990majorization,vomEnde,ende2022thermomajorization} for further details.
\par

Recall that a square matrix $P$ is row (sub)stochastic if all entries are nonnegative and the row sums are (less than or) equal to $1$. If we denote the row vector $e=(1,\ldots,1)$, we have that a square matrix $P$ with nonnegative entries is row  (sub)stochastic if $Pe'=e'$ ($Pe'\leq e'$), where $e'$ denotes the transposed vector of $e$ and $\leq$ is to be understood component-wise.
\par

We define the following {types} of matrices with nonnegative entries (mapping $\mathcal{A}=\R_+^n$ to itself):
\begin{itemize}
\item \textbf{Doubly stochastic matrices (DSM)}: \\
row stochastic matrices $D$ with column sums equal to $1$, i.e., $eD=e$.
\item \textbf{Weakly doubly stochastic matrices (wDSM)}: \\
row substochastic matrices $D$ with column sums at most $1$, i.e., $eD\leq e$.
\item \textbf{Upper triangular stochastic matrices (UTSM)}: \\
row stochastic matrices $S=(s_{ij})$ with $s_{ij}=0$ whenever $i>j$.
\item \textbf{Weakly upper triangular stochastic matrices (wUTSM)}: \\
row substochastic matrices $S=(s_{ij})$ with $s_{ij}=0$ whenever $i>j$.
\end{itemize}

The sets of $n\times n$ DSMs, wDSM, UTSMs, and wUTSMs,  are all semigroups with identity.
We denote the semigroup majorization of DSMs, wDSM, UTSMs, and wUTSMs by 
$\prec$, $\prec_w$, $\prec^{\UT}$, $\prec^{\UT}_w$ for short, and we say that $x$ is majorized, weakly majorized,  UT-majorized, weakly UT-majorized by $y$, whenever $x\prec y$, $x\prec_{w} y$, $x\prec^{\UT} y$, or $x\prec^{\UT}_w y$, respectively.
\par

    Birkhoff's theorem~\cite{Birkhoff} states that the set of doubly stochastic matrices is the convex hull of all permutations matrices. As a consequence, $x\prec y$ if and only if $x$ lies in the convex hull of the $n!$ permutations of $y$ 
   (this is a restatement of~\eqref{eq:conv_hull_i}).
     The set of $n\times n$ doubly substochastic matrices is the convex hull of the set of $n\times n$ matrices that have at most one unit in each row and each column, and all other entries are zero~\cite{Mirsky}. 
Therefore  $x\prec_w y$ if and only if $x$ is in the convex hull of the  points $(\eta_1y_{\pi(1)},\ldots,\eta_ny_{\pi(n)})$, where $\pi$ is a permutation and each $\eta_i$ is $0$ or $1$.
\par

Let $\SS^{\UT}$ and $\SS_w^{\UT}$  be the sets of  UTSMs and wUTSMs, respectively. (all entries are nonnegative and the row sums are equal to $1$). It is easy to verify that both $\SS^{\UT}$ and $\SS_w^{\UT}$ are convex sets. The extremal points of $\SS^{\UT}$ are upper-triangular matrices with a single entry $1$ in each row and all other entries zero. The extremal points of $\SS_w^{\UT}$ are upper-triangular matrices with at most one entry $1$ in each row and all other entries zero.
\par

To summarise, the {antecedents} of $y\in\Delta_{n-1}$ in $\mathcal{A}=\R_+^n$ for the preorders $\prec$, $\prec_w$, $\prec^{\UT}$ and $\prec_{w}^{\UT}$ are
\be
\gamma(y,\prec)=\conv\{\left(y_{\pi(1)},\ldots,y_{\pi(n)}\right)\colon \pi\in S_n\},
\ee
\be
\gamma(y,\prec_w)=\conv\{\left(\eta_1y_{\pi(1)},\ldots,\eta_ny_{\pi(n)}\right)\colon \pi\in S_n, \eta_i\in\{0,1\}\},
\ee
\be
\gamma(y,\prec^{\UT})
=\conv\left\{\left(\sum_{ j_1\in f^{-1}(1)}y_{j_1},\ldots,\sum_{j_n\in f^{-1}(n)}y_{j_n}\right)\colon f\colon[n]\to[n]\,\,\text{s.t. $f(i)\geq i$ for all $i$ }\right\}.
\ee
\begin{multline}
\gamma(y,\prec_w^{\UT})
=\conv\Biggl\{\left(\sum_{ j_1\in f^{-1}(1)}\eta_{j_1}y_{j_1},\ldots,\sum_{j_n\in f^{-1}(n)}\eta_{j_n}y_{j_n}\right)\colon \\f\colon[n]\to[n]\,\,\text{s.t. $f(i)\geq i$ for all $i$, and } \eta\in\{0,1\}^n\Biggr\}.
\end{multline}

\begin{ex}
For $n=3$, let $y=(y_1,y_2,y_3)$. Then,
\begin{multline}
\gamma(y,\prec)=\conv\{\left(y_1,y_2,y_3\right),\left(y_2,y_1,y_3\right),\left(y_2,y_3,y_1\right),
\left(y_1,y_3,y_2\right),\left(y_3,y_1,y_2\right),\left(y_3,y_2,y_1\right)\},
\end{multline}
\begin{multline}
\gamma(y,\prec^{\UT})=\conv\{\left(y_1,y_2,y_3\right),\left(y_1,0,y_2+y_3\right),\left(0,y_1+y_2,y_3\right),\\\left(0,y_2,y_1+y_3\right),\left(0,y_1,y_2+y_3\right),\left(0,0,y_1+y_2+y_3\right)\}.
\end{multline}
See Fig.~\ref{F-convexulls} for examples of the antecedent sets.
\end{ex}
\begin{rem}
    The number of extremal points of $\gamma(y,\prec)$ is generically $n!$ (the number of permutations of an $n$-set); the number of extremal points of $\gamma(y,\prec_w)$ is generically $\sum_{k=0}^nk!\binom{n}{k}^2$, the number of 
 partial permutations of an $n$-set~\cite{OEISA002720}. These are upper bounds for the number of extremal points of $\gamma(y,\prec^{\rm UT})$ and $\gamma(y,\prec_w^{\rm UT})$, respectively.
See Fig.~\ref{F-convexulls}.
\end{rem}
\begin{figure*}
	\includegraphics{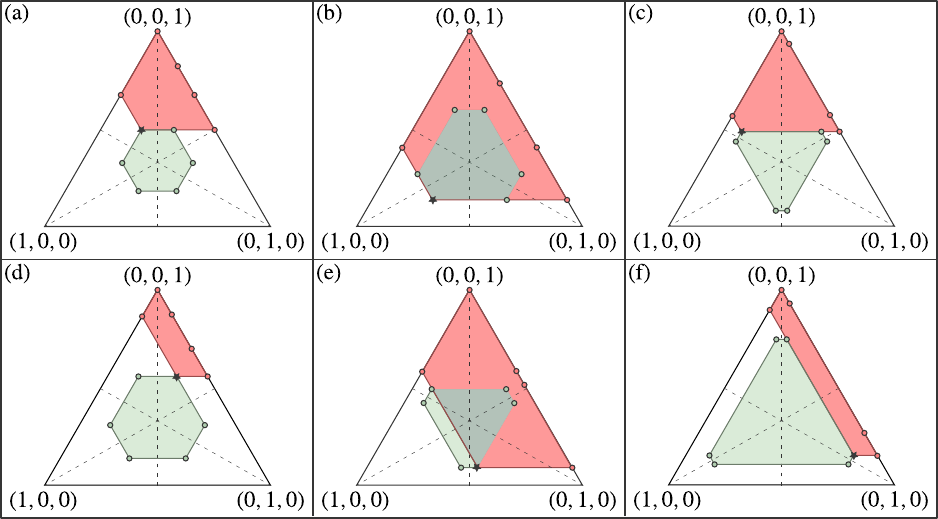}
        \captionsetup{width=1\linewidth}  
	\caption{Antecedent sets $\gamma(y, \prec)$ (in green) and $\gamma(y, \prec^{\UT})$ (in red) for each $y \in \Delta_2$ (depicted as a black star), where $y$ is (a) $y = (0.33, 0.18, 0.49)$, (b) $y = (0.60, 0.26, 0.14)$, (c) $y = (0.43, 0.08, 0.49)$, (d) $y = (0.14, 0.31, 0.55)$, (e) $y = (0.42, 0.49, 0.09)$, and (f) $y = (0.10, 0.74, 0.16)$. Points obtained by applying extreme operations with respect to majorization and UT-majorization are depicted by green and red circles, respectively. Note that these are not necessarily extreme points of the corresponding antecedent sets.}
	\label{F-convexulls}
\end{figure*}
It can be verified~\cite{ParkerRam} that UT-majorization is equivalent to unordered majorization, which is defined by the same inequalities \eqref{eq:majorization} where the components of the vectors are unordered. To avoid further notation, we still denote unordered majorization with  $x\prec^{\UT} y$, so that we have:
   \be
   x\prec^{\UT} y\quad\text{iff}\quad
   \begin{cases}
 \displaystyle  \sum_{i=1}^{k} x^{}_i\leq    \sum_{i=1}^{k} y^{}_i, &\text{$k=1,\ldots,n-1$}\\\\
\displaystyle   \sum_{i=1}^{n} x^{}_i=    \sum_{i=1}^{n} y^{}_i.
   \end{cases}
   \ee
Note that if $x=yM$, with $M\in \SS^{\UT}$, then $x_i=\sum_{j=1}^ny_jM_{ji}=\sum_{j=1}^iy_jM_{ji}$ since $M_{ji}=0$ for $j>i$. Moreover, $\sum_{j=i}^nM_{ij}=1$. Note that $M_{nn}=1$ and $e_n=e_nM$ for all $M\in \SS^{\UT}$.

\begin{rem}[Physics] Another prominent example of semigroup majorization is obtained from the semigroup $\SS^q$ of \textbf{$q$-stochastic matrices}  defined as follows.
Let $q$ be a nonnegative vector.  A row stochastic matrix $S$ is said to be $q$-stochastic if 
$qS=q$.
For $x,y\in\R^{n}$ we say that $x$ is $q$-majorized by $y$, and write $x\prec_{ \SS^{q}} y$ if $x=yS$ for some $S\in \SS^{q}$. In $q$-majorization, $q=\bot$ is always the unique bottom element, while $e_k$ is a top element if and only if $q_k$ is minimal in $q$. See~\cite[Theorem 3]{vomEnde}. For $q=e$ we get the set of DSMs. 
 Notice that $\SS^{\UT}$ is a proper subsemigroup of $\SS^{e_n}$,  $e_n=(0,\ldots,0,1)$. 
 \par

If we write $q=\left(e^{-\beta E_1},\ldots,e^{-\beta E_n}\right)$ for some sequence of energy levels $E_1>\cdots> E_n\in \R$ and inverse temperature $\beta>0$, then $q$-majorization is equivalent to \textbf{thermomajorization}~\cite{horodecki2013fundamental,shiraishi2020two}, as $\SS^q$ is the semigroup that fixes the `Gibbs state' $q$. At high temperature $\beta\to0$, thermomajorization reduces to standard majorization; at zero temperature $\beta\to\infty$, thermomajorization relation collapses to UT-majorization. Note that in physics it is customary to order the energy levels in increasing order $E_1<\cdots<E_n$. To make a correspondence with the existing literature~\cite{horodecki2013fundamental,narasimhachar2015low} it is enough, by taking the transpose, to consider column-stochastic (rather than row-stochastic) matrices acting on column probability vectors. Note that transposing an upper-triangular matrix yields a lower-triangular matrix.
\end{rem}
\begin{rem}[Combinatorics]
In the realm of discrete combinatorics, the counterpart of majorization is called \textbf{dominance order} for integer partitions or integer compositions. There exists a natural extension due to Narayana~\cite{Narayana59} that can be used in the continuous setting to provide yet another generalisation of majorization. Let $s\in\R$. For $x,y\in\R^n$, the vector $x$ is said to be \emph{$s$-dominated} by $y$, and we write $x\prec^s y$ if
  \be
 \displaystyle  \sum_{i=1}^{k} x^{\downarrow}_i-    \sum_{i=1}^{k} y^{\downarrow}_i\leq s, \text{for $k=1,\ldots,n$}.\label{eq:sdominance}
   \ee
When $s=0$, the $s$-dominance order is the weak majorization order. 
\end{rem}
\par

\subsection{Orderings induced by function sets}
Here is another convenient description of  partial orders induced by functions that can be thought of as generalised entropies.
Let $\Phi$ be a set of real-valued functions $\phi$ defined on a subset $\mathcal{A}$ of $\R^n$. For $x,y\in\mathcal{A}$ we write
\begin{equation}\label{eq:majPhi}
	x\prec_\Phi y\quad \text{ iff }\quad \phi(x)\leq \phi(y),\, \text{ for all }\phi\in\Phi.
\end{equation}
If \eqref{eq:majPhi} holds, then it holds also for all linear combinations of functions in $\Phi$ with positive coefficients. These linear combinations form a cone that is said to be generated by  $\Phi$. 
\par

For instance, let $\Phi:=\{\phi_1,\ldots,\phi_{n+1}\}$, where 
	\begin{equation}\label{eq:frameMaj}
		\phi_k(x)=\sum_{i=1}^{k}x_i^\downarrow, \quad k=1,\dots,n, \qquad \phi_{n+1}(x)=-\phi_n(x).	\end{equation}
	Then $\prec_{\Phi}$ is equivalent to $\prec$ according to definition \eqref{eq:majorization}. 
The cone generated by  $\Phi_w:=\{\phi_1,\ldots,\phi_n\}$, with $\phi_k$ in~\eqref{eq:frameMaj} induces the weak majorization preorder on $\mathbb{R}_+^n$.
\par

Similarly, the sets $\Phi^{\UT}_w:=\{\phi_1^{\UT},\dots,\phi_{n}^{\UT}\}$, and  $\Phi^{\UT}:=\Phi^{\UT}_w\cup\{\phi_{n+1}^{\UT}\},$
where
\begin{equation}
\label{eq:frameUT}
	\phi_k^{\UT}(x)=\sum_{i=1}^k x_i, \quad k=1,\dots, n, \qquad \phi_{n+1}^{\UT}(x)=-\phi_n^{\UT}(x),
\end{equation}
generate the cones inducing weak UT-majorization and UT majorization, respectively.

\section{Probability that generic states are comparable}
\label{sec:comp}
Let $(\AA,\sqsubseteq)$ be a preorder with $\AA\subset \R^n$,
and let $\mu$ be a probability measure on $\AA$. If $\Gamma$ is measurable, we define the \textbf{probability} of $\Gamma$ as
\be
P(\Gamma)=\mu{\otimes}\mu\left(\Gamma\right)=\mu{\otimes}\mu\left(\{(x,y)\in \AA\times \AA\colon x\sqsubseteq y\}\right).
\ee
The above quantity can be interpreted as follows. Pick $X,Y$ independently at random according to the probability measure $\mu$. Then $P(\Gamma)=P(X\sqsubseteq Y)$ is the probability that $X$ is less than $Y$. By symmetry, this is also equal to the probability $P(Y\sqsubseteq X)$, since $X,Y$ are independent and identically distributed.  
\par

In this paper we consider the case of partial orders on the simplex of discrete probability vectors. 

Let $n\in\N$ and consider the $(n-1)$-dimensional simplex 
\be
	\Delta_{n-1}:=\left\{x\in\R^{n}\colon x_i\geq0, \sum_{i=1}^n x_i=1 \right\}=\conv\{e_1,\ldots,e_n\},
\ee
where $e_1,\ldots,e_n$ stand for the unit vectors of the
standard orthonormal basis of $\R^n$.
\par

With this notation, consider the poset $\left(\AA,\sqsubseteq\right)=\left(\Delta_{n-1},\prec\right)$, with 
$\prec$ the preorder of majorization. 
Note that the bottom element is  $\bot=\frac{1}{n}(e_1+\cdots+e_n)=(1/n,\ldots,1/n)$, while the top elements are $e_1,\ldots, e_n$.
In~\cite{Cunden2021}, Cunden, Facchi, Florio and Gramegna,  proved the following asymptotic result.
\begin{thm}[\cite{Cunden2021}]\label{thm:CFFG21}
Let $\Gamma_n=\Gamma\left(\Delta_{n-1},\prec\right)$, and $\mu_n$ the uniform probability measure on the simplex $\Delta_{n-1}$. Write $P(\Gamma_n)=\mu_n\otimes \mu_n(\Gamma_n)$. Then,
\be
\lim_{n\to\infty}P(\Gamma_n)=0.
\ee
\end{thm}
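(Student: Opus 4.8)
The plan is to show that $P(\Gamma_n) = \mu_n \otimes \mu_n(\{(x,y) : x \prec y\}) \to 0$ by producing, for each $n$, a single "entropy-type" monotone functional $\phi$ such that $x \prec y$ forces $\phi(x) \le \phi(y)$, and then arguing that $\phi(X)$ and $\phi(Y)$ concentrate so tightly around the same deterministic value $c_n$ with fluctuations on a scale much larger than the bias, that the event $\phi(X) \le \phi(Y)$ — while having probability $1/2$ — carries the additional majorization constraints only with vanishing probability. Concretely, I would use the fact that $x \prec y$ implies $\phi_1(x) = x_1^\downarrow \le y_1^\downarrow = \phi_1(y)$ together with the sum constraint, and more usefully that $x \prec y$ is equivalent to a whole family of constraints; the cleanest route is to pick a strictly Schur-convex function, e.g.\ $\phi(x) = \sum_i x_i^2$ (or equivalently the Rényi-$2$ entropy), which is monotone under $\prec$: $x \prec y \Rightarrow \sum_i x_i^2 \le \sum_i y_i^2$.

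First I would recall the standard facts about the flat Dirichlet distribution on $\Delta_{n-1}$: if $X$ is uniform on $\Delta_{n-1}$ then $X \eqlaw (E_1, \dots, E_n)/(E_1 + \cdots + E_n)$ with $E_i$ i.i.d.\ $\mathrm{Exp}(1)$, so $\EE[\sum_i X_i^2] = \frac{2}{n+1}$ and, by a routine moment computation, $\mathrm{Var}(\sum_i X_i^2) = \Theta(n^{-3})$, i.e.\ $\sum_i X_i^2 = \frac{2}{n} (1 + O_{\PP}(n^{-1/2}))$. Thus $n X_i^2$-type statistics have a genuine CLT: $\sqrt{n}\,\big(n \sum_i X_i^2 - 2\big)$ converges to a nondegenerate Gaussian. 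The key point is that $n \sum_i X_i^2$ and $n \sum_i Y_i^2$ are independent and each fluctuate on scale $n^{-1/2}$, so $P(\sum_i X_i^2 \le \sum_i Y_i^2) = 1/2 + o(1)$ — that alone is not enough. To get the factor to zero I would instead exploit that majorization is a conjunction of $n-1$ partial-sum inequalities applied to the sorted vectors, or, better, use the characterization that $x \prec y$ iff $x$ lies in the convex hull of permutations of $y$; then I would bound $P(\Gamma_n)$ by the probability that $X$ lands in that convex hull and estimate its volume.

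The cleanest argument, and the one I would actually carry out, follows \cite{Cunden2021}: bound $P(\Gamma_n) = \EE_Y[\mu_n(\gamma(Y,\prec))] = \EE_Y[\mu_n(\conv\{Y_{\pi} : \pi \in S_n\})]$. The set $\conv\{Y_\pi\}$ is the permutohedron of $Y$; its $(n-1)$-dimensional volume relative to the whole simplex can be computed or bounded. One shows $\mathrm{vol}(\conv\{Y_\pi\})/\mathrm{vol}(\Delta_{n-1})$ is a symmetric function of $Y$ that, for $Y$ uniform, is with high probability exponentially small in $n$ — intuitively because a generic $Y$ is close to the barycenter $\bot$ only in the sense that its permutohedron is a tiny sliver compared to the simplex whose "radius" is order one while the spread of $Y$'s coordinates keeps the permutohedron from filling much volume. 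Then $P(\Gamma_n) \le \EE_Y[\text{that ratio}] \to 0$ by dominated convergence, using a uniform bound plus the a.s.\ (or in-probability) convergence to $0$ of the ratio.

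The main obstacle is the volume estimate for the permutohedron $\conv\{Y_\pi : \pi \in S_n\}$ of a random $Y$ and showing its expected relative volume in $\Delta_{n-1}$ vanishes: this requires either an exact formula (the permutohedron volume is a known polynomial in the $Y_i$, essentially a multiple of a Vandermonde-type expression / number of linear extensions) combined with concentration of $Y$'s order statistics, or a soft argument via a well-chosen monotone $\phi$ showing $\phi(X) < \phi(Y) - \delta$ fails w.h.p.\ on the relevant scale. I expect to handle it by the monotone-function route: choose finitely many monotones $\phi_1,\dots,\phi_m$ from the generating cone \eqref{eq:frameMaj}, note their joint law under the Dirichlet measure has a limiting density after centering and rescaling by $\sqrt{n}$, and observe that the region $\{\phi_j(x) \le \phi_j(y)\ \forall j\}$ in the limiting $2m$-dimensional Gaussian picture has measure $\to 0$ as $m \to \infty$ because the sorted partial sums of independent samples almost surely cross. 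Since \eqref{eq:main1_maj} is stated as a restatement of a proven claim, I would ultimately just cite \cite{Cunden2021} for the rigorous volume bound and present the above as the structural reason it holds.
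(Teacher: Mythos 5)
The paper does not re-prove this theorem; it cites \cite{Cunden2021} and explicitly notes that the proof there ``uses Kolmogorov's 0-1 law and hence it is fundamentally non-quantitative.'' Your proposal does not use a 0-1 law anywhere, so it is not the same route, and — more to the point — none of the three sketches you offer is actually carried out.

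Concretely: (i) you correctly observe that a single Schur-convex monotone (e.g.\ $\sum_i x_i^2$) only gives $P(\phi(X)\le\phi(Y))=1/2+o(1)$, which is useless; (ii) the permutohedron route $P(\Gamma_n)=\EE_Y[\mu_n(\conv\{Y_\pi\})]$ is a valid reformulation, but the entire content is the claim that the relative volume of the permutohedron of a uniform random $Y$ tends to zero, and you give no argument for that — the heuristic ``the spread of $Y$'s coordinates keeps the permutohedron from filling much volume'' is exactly the thing that needs a proof, and it is not obvious that the expectation vanishes without a quantitative estimate or a soft limiting device; (iii) the ``finitely many monotones with a Gaussian limit whose acceptance region shrinks as $m\to\infty$'' is again only a heuristic; you never specify the limiting process, never justify interchanging $m\to\infty$ with $n\to\infty$, and the assertion that ``sorted partial sums almost surely cross'' is precisely what needs proving. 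You then acknowledge the gap and defer to the citation.

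So the proposal is not a proof: it identifies plausible intermediate reductions (the permutohedron-volume identity, the monotone-family characterization) but the missing idea is the actual mechanism that forces the probability to zero. In \cite{Cunden2021} that mechanism is Kolmogorov's 0-1 law applied to the exponential representation of the simplex (a representation you do invoke, but never exploit for a tail $\sigma$-algebra argument). A genuinely quantitative alternative exists (Theorem~\ref{thm:JKM24} of \cite{Jain24}), but it requires an explicit estimate of the form you never produce. As written, your argument would need either the 0-1 law machinery or a concrete volume/crossing bound to close the gap.
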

This statement is Eq.~\eqref{eq:main1_maj} of Theorem~\ref{thm:intro_maj}. The proof of Theorem~\ref{thm:CFFG21} in~\cite{Cunden2021} uses Kolmogorov's 0-1 law and hence it is fundamentally
non-quantitative.  
Recently, Jain,  Kwan and Michelen~\cite{Jain24} proved the following quantitative (though presumably not optimal) estimate.
\begin{thm}[\cite{Jain24}]
\label{thm:JKM24}
With the same notation of Theorem~\ref{thm:CFFG21}, there is a constant $C>0$ such that
\be
P(\Gamma_n)\leq C\left(\log n/n\right)^{1/16}.
\ee
\end{thm}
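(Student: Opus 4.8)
The plan is to recast the comparability event $\{X\prec Y\}$ as a \emph{persistence} (one‑sided barrier) event for an integrated‑random‑walk‑type bridge, and to bound that persistence probability by a multi‑scale anti‑concentration argument; the various truncations and Cauchy–Schwarz steps needed to make this rigorous are what degrade the (conjecturally sharp) polynomial rate to the exponent $1/16$ and introduce the spurious $\log n$ factor in Theorem~\ref{thm:JKM24}.

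First I would invoke the stick‑breaking / Rényi representation of the uniform (Dirichlet) measure on $\Delta_{n-1}$ (cf.\ Appendix~\ref{app:stick}): writing $X=E/\|E\|_1$ with $E=(E_1,\dots,E_n)$ i.i.d.\ $\mathrm{Exp}(1)$, the sorted components can be realized as
\be
X^{\downarrow}_k=\frac{1}{\|E\|_1}\sum_{j=k}^{n}\frac{\eta_j}{j},\qquad \|E\|_1=\sum_{j=1}^n\eta_j ,
\ee
with $\eta_1,\dots,\eta_n$ i.i.d.\ $\mathrm{Exp}(1)$, and an independent family $\theta_1,\dots,\theta_n$ doing the same for $Y$. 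Hence $S^X_k:=\sum_{i\le k}X^{\downarrow}_i=\|E\|_1^{-1}\bigl(\sum_{j\le k}\eta_j+k\sum_{j>k}\eta_j/j\bigr)$, and since $\sum_iX_i=\sum_iY_i=1$ the relation $X\prec Y$ is equivalent to $D_k:=S^Y_k-S^X_k\ge0$ for all $k=1,\dots,n-1$, with $D_0=D_n=0$ automatic. The useful feature of this representation is that the discrete \emph{second} differences $D_k-2D_{k-1}+D_{k-2}$ equal $\tfrac1{k-1}\bigl(\eta_{k-1}/\|E\|_1-\theta_{k-1}/\|F\|_1\bigr)$: conditionally on the two norms $\|E\|_1,\|F\|_1$ these are independent, symmetric, with variances of order $(nk)^{-2}$. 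Thus $D$ is (a bridge version of) an integrated random walk with shrinking steps, and $\{X\prec Y\}$ is exactly the event that this integrated walk never returns above its starting level.

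To turn this into the quantitative bound I would condition along $m\asymp\log_2 n$ geometrically spaced checkpoints $1\le k_1<\dots<k_m\le n-1$ and show that
\be
P\bigl(D_{k_i}\ge0\ \big|\ D_{k_1},\dots,D_{k_{i-1}}\bigr)\le 1-c
\ee
for an absolute constant $c>0$, which gives $P(X\prec Y)\le(1-c)^m\le n^{-c'}$. The conditional estimate needs two steps. \emph{Decoupling}: express $D_{k_i}$ as a past‑measurable quantity plus a ``fresh'' contribution built only from the $\eta_j,\theta_j$ with $j\in(k_{i-1},k_i]$; as these index blocks are disjoint, and the normalizations can be frozen by the conditioning, the fresh contributions across $i$ are (almost) independent. \emph{Anti‑concentration}: the fresh contribution is a weighted sum of $\asymp k_i$ independent symmetric variables, hence by a local CLT / Berry–Esseen bound its law has spread comparable to its standard deviation, so whatever the past‑measurable conditional mean, it drives $D_{k_i}$ strictly below $0$ with probability bounded below — here the symmetry of $\eta_j-\theta_j$ enters. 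Finally one removes the conditioning on $\|E\|_1,\|F\|_1$; since these appear in every $S_k$ and therefore entangle all scales, passing from the conditioned to the unconditioned estimate, and making ``disjoint blocks $\Rightarrow$ independent fresh contributions'' literally true (which forces one to truncate the exponential tails and union over blocks), each costs a square‑root‑type loss, and together with the non‑optimal constant $c$ they are responsible for the final exponent $1/16$ and the extra $\log n$.

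I expect the main obstacle to be exactly this last issue: the common normalizations $\|E\|_1,\|F\|_1$ — and, more intrinsically, the long‑range positive correlations present in any sorted random vector — couple the Lorenz bridge across scales, so a crude union bound over scales is too wasteful while a crude conditioning destroys exact independence, and one needs a careful truncation/coupling to interpolate between the two. A cleaner‑looking alternative that would avoid the explicit multi‑scale bookkeeping is to prove a sub‑multiplicativity estimate $P(\Gamma_{2n})\le(1-c)\,P(\Gamma_n)$ by realizing the dimension‑$2n$ comparability event as the dimension‑$n$ event intersected with a single extra ``linking'' constraint failing with conditional probability at least $c$; iterating over the $\log_2 n$ doublings would then yield a polynomial bound at once, but engineering the linking constraint so that its conditional failure probability is bounded below uniformly in $n$ is itself the crux.
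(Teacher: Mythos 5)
This is a cited theorem — the paper does not contain a proof of Theorem~\ref{thm:JKM24}; it is quoted from Jain, Kwan and Michelen~\cite{Jain24} — so there is no internal proof to compare against. I can only assess your proposal on its own merits.

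Your set-up is sound and broadly in the spirit of what one would expect: the R\'enyi/stick-breaking representation you write for $X^{\downarrow}_k$ and for $\phi_k(X)$ is exactly the paper's Proposition~\ref{prop:repr_phik}, the reformulation of $\{X\prec Y\}$ as a one-sided barrier event for the Lorenz bridge $D_k=S^Y_k-S^X_k$ is correct, and your computation that the discrete second differences $D_k-2D_{k-1}+D_{k-2}=\tfrac1{k-1}\bigl(\eta_{k-1}/\|E\|_1-\theta_{k-1}/\|F\|_1\bigr)$ are, conditionally on the two norms, independent and symmetric with variance $\asymp(nk)^{-2}$ is right. This is the correct analogue, in the majorization setting, of the Sparre–Andersen reformulation the paper uses for Theorem~\ref{thm:VanishingProb}: there exchangeability gives an \emph{exact} persistence probability, while here the sorted coordinates destroy exchangeability and one has to settle for quantitative bounds.

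The gap is in the anti-concentration step, and it is not a cosmetic one. You assert that, conditionally on $D_{k_1},\dots,D_{k_{i-1}}$, the ``fresh'' contribution at scale $k_i$ ``drives $D_{k_i}$ strictly below $0$ with probability bounded below, \emph{whatever the past-measurable conditional mean}.'' That cannot be literally true: the fresh contribution is symmetric with standard deviation $\asymp\sqrt{k_i}/n$, so if the past-measurable part of $D_{k_i}$ already sits many multiples of $\sqrt{k_i}/n$ above zero (which is exactly the tendency created by conditioning on the earlier $D_{k_j}\ge0$), the probability of crossing zero at scale $k_i$ is exponentially small, not bounded below by a constant. What is missing is precisely the heart of every persistence-probability argument: a control on the conditional law of $D_{k_{i-1}}$ given the past barrier constraints, showing that a constant fraction of its mass stays within $O(1)$ standard deviations of zero, so that the next fresh increment has a genuine chance to cross. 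Without that (e.g.\ an entropic-repulsion or a Markov-chain comparison lemma), the claimed bound $P(D_{k_i}\ge0\mid\text{past})\le 1-c$ does not follow, and the geometric decay $(1-c)^{m}$ is unsupported. Relatedly, your account of where $1/16$ and the $\log n$ come from is purely heuristic (``square-root-type loss'' per truncation); it does not identify which Cauchy–Schwarz/decoupling steps occur or why exactly four halvings are incurred. Your sub-multiplicativity alternative, $P(\Gamma_{2n})\le(1-c)P(\Gamma_n)$, has the additional unresolved problem that $\Delta_{2n-1}$ does not map to $\Delta_{n-1}$ in any way compatible with the sorted partial-sum constraints, so ``the dimension-$2n$ event equals the dimension-$n$ event intersected with a linking constraint'' is not something you can take for granted; as you yourself note, that is the crux, not a detail.
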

The same authors~\cite{Jain24}, proved the analogue of Theorem~\ref{thm:CFFG21} when the measure $\mu_n$ on $\Delta_{n-1}$ comes from the  fixed-trace Wishart measure of random matrix theory, thus settling a conjecture put forward by Nielsen~\cite{Nielsen1999} and numerically investigated in~\cite{cunden2020} in the context of entanglement theory. 
\par

 Consider now the poset $\left(\AA,\sqsubseteq\right)=\left(\Delta_{n-1},\prec^{\UT}\right)$; in this case the bottom element is  $\bot=e_n=(0,0,\ldots,1)$, while the top element is $\top=e_1=(1,0,\ldots,0)$.
 We state the following simple theorem on the exact probability of $\Gamma_n$ for UT-majorization. 
\begin{thm}\label{thm:VanishingProb}  Consider the graph $\Gamma_n=\Gamma\left(\Delta_{n-1},\prec^{{\UT}}\right)$ and let $\mu_n$ be an absolutely continuous and permutation symmetric probability measure on $\Delta_{n-1}$. Then, for all $n\geq2$, 
\be
P(\Gamma_n)=\frac{1}{n}.
\ee
\end{thm}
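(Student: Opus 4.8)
The plan is to translate $x\prec^{\UT}y$ into a statement about partial sums and then exploit a cyclic-symmetry (``cycle lemma'') argument. For $v\in\R^n$ write $s_k(v)=v_1+\cdots+v_k$ with $s_0(v)=0$. By the unordered-majorization description of $\prec^{\UT}$ recalled above, for $x,y\in\Delta_{n-1}$ one has $x\prec^{\UT}y$ iff $s_k(x)\le s_k(y)$ for $k=1,\dots,n-1$; since $s_0=0$ and $s_n=1$ on the simplex, this is the same as $s_k(y-x)\ge0$ for all $k=0,1,\dots,n$. So, taking $X,Y$ independent with law $\mu$ and setting $Z:=Y-X$, we get $P(\Gamma_n)=P\bigl(s_k(Z)\ge0 \text{ for all } k\bigr)$. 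I will use two elementary features of $Z$: first $\sum_i Z_i=0$, so $s_0(Z)=s_n(Z)=0$; second, because $\mu\otimes\mu$ is invariant under simultaneous coordinate permutations, the law of $Z$ is permutation-invariant, in particular invariant under cyclic shifts.

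Next comes the cycle-lemma bookkeeping. For $j\in\{0,1,\dots,n-1\}$ let $Z^{(j)}=(Z_{j+1},\dots,Z_n,Z_1,\dots,Z_j)$ be the $j$-th cyclic rotation of $Z$, and let $A_j$ be the event that $s_k(Z^{(j)})\ge0$ for all $k=1,\dots,n$; thus $A_0=\{X\prec^{\UT}Y\}$ (the last condition $s_n(Z)=0\ge0$ being automatic). Extending $s_\bullet(Z)$ to a period-$n$ sequence — legitimate since $\sum_iZ_i=0$ — one checks directly that $s_k(Z^{(j)})=s_{j+k}(Z)-s_j(Z)$, and that as $k$ runs over $1,\dots,n$ the values $s_{j+k}(Z)$ run exactly once over $s_0(Z),\dots,s_{n-1}(Z)$ (any $n$ consecutive integers form a complete residue system mod $n$). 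Hence $A_j$ occurs iff $s_j(Z)=\min_{0\le m\le n-1}s_m(Z)$, so
\begin{equation*}
\sum_{j=0}^{n-1}\mathbf{1}_{A_j}=\#\bigl\{\,j\in\{0,\dots,n-1\}\colon s_j(Z)=\min_{0\le m\le n-1}s_m(Z)\,\bigr\}.
\end{equation*}

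To finish, I would invoke symmetry and an almost-sure uniqueness statement. Absolute continuity of $\mu$ forces, for every proper nonempty $I\subsetneq\{1,\dots,n\}$, the random variable $\sum_{i\in I}X_i$ to be absolutely continuous on $[0,1]$ (the linear functional $x\mapsto\sum_{i\in I}x_i$ is non-constant on $\Delta_{n-1}$), hence the difference $\sum_{i\in I}(Y_i-X_i)$ of two independent copies is non-atomic; consequently $s_i(Z)\ne s_j(Z)$ a.s.\ whenever $\{i,j\}\ne\{0,n\}$. Therefore $s_0(Z),\dots,s_{n-1}(Z)$ are a.s.\ pairwise distinct, the minimum above is a.s.\ attained at a single index, and $\sum_{j=0}^{n-1}\mathbf{1}_{A_j}=1$ almost surely. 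Taking expectations and using $Z^{(j)}\eqlaw Z$ (so $P(A_j)=P(A_0)$ for all $j$) yields $1=\sum_{j=0}^{n-1}P(A_j)=n\,P(A_0)=n\,P(\Gamma_n)$, i.e.\ $P(\Gamma_n)=1/n$.

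The part requiring genuine care — hence the main obstacle — is the cyclic bookkeeping: verifying $s_k(Z^{(j)})=s_{j+k}(Z)-s_j(Z)$ and the resulting equivalence ``$A_j$ holds $\iff$ $s_j(Z)$ is the running minimum of the partial sums'', which is the essence of the cycle lemma in this setting. The only other point to watch is that the minimiser is a.s.\ unique for a \emph{general} absolutely continuous permutation-symmetric $\mu$ — not just the uniform measure — which is the elementary measure-theoretic remark just made.
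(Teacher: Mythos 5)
Your proof is correct. You reach the same reduction as the paper — set $Z=Y-X$ and recognise that $P(\Gamma_n)$ is the probability that the bridge of partial sums $s_0(Z),\dots,s_n(Z)$ (with $s_0=s_n=0$) stays nonnegative — but from there the two arguments diverge. The paper invokes the Sparre--Andersen fluctuation theorem for exchangeable increments as a black box, concluding that the number of positive partial sums is uniformly distributed and reading off the $1/n$. You instead give a self-contained cycle-lemma proof: the identity $s_k(Z^{(j)})=s_{j+k}(Z)-s_j(Z)$ shows that the rotation event $A_j$ occurs precisely when $s_j(Z)$ is the running minimum, the minimiser is a.s.\ unique because absolute continuity makes $s_0(Z),\dots,s_{n-1}(Z)$ pairwise distinct, and cyclic exchangeability gives $P(A_j)=P(A_0)$, whence $nP(A_0)=1$. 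This is essentially a proof of the special case of Sparre--Andersen that the paper uses, so your route is both more elementary and slightly more careful: the paper's citation is to the non-bridge version of the theorem (where $P(S_n=0)=0$ is assumed), whereas here $S_n=0$ almost surely, so the bridge/cycle-lemma form you wrote out is in fact the one actually needed. Your treatment of the a.s.\ uniqueness of the minimiser — reducing it to non-atomicity of $\sum_{i\in I}(Y_i-X_i)$ for proper nonempty $I$ — is exactly the right way to use the absolute-continuity hypothesis for a general permutation-symmetric $\mu$.
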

 When $\mu_n$ is the uniform measure on the simplex, Theorem~\eqref{thm:VanishingProb} provides  Eq.~\eqref{eq:main1} of Theorem~\ref{thm:intro}.
 
 \section{Maximal success probability of conversion}
 \label{sec:weak}

Note that for $x,y\in\Delta_{n-1}$, $x\prec y$ if and only if $x\prec_w y$, since the normalization condition automatically guarantees the equality constraint in Eq.~\eqref{eq:majorization}. 

For $x,y\in\Delta_{n-1}$, it can happen that $x$ is not majorized by $y$, but it is always possible to find $p\in[0,1]$ and $z\in \Delta_{n-1}$ such that
\be
\label{eq:convexMaj}
px+(1-p)z\prec y.
\ee
Let
\be
\Pi:\Delta_{n-1}\times \Delta_{n-1}\to[0,1],
\ee
with $\Pi(x,y)$ defined as the largest $p\in[0,1]$ such that~\eqref{eq:convexMaj} holds for some $z\in\Delta_{n-1}$. Equivalently, $\Pi(x,y)$ is the largest $p$ such that $yD=px+(1-p)z$ for some DSM $D$ and some vector $z\in\Delta_{n-1}$. 
\par

The function $\Pi$ can be expressed in terms of the monotones  $\phi_1,\ldots,\phi_{n}$  as defined in \eqref{eq:frameMaj}.
\begin{prop}
\label{prop:VidalMaj}
\be
\label{eq:propVidalMaj}
	\Pi(x,y)=\min_{1\leq k \leq n}\frac{\phi_k(y)}{\phi_k(x)}.
	\ee
	where $\phi_k$ are defined in~\eqref{eq:frameMaj}.\
\end{prop}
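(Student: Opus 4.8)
The plan is to prove the identity $\Pi(x,y)=\min_{1\le k\le n}\phi_k(y)/\phi_k(x)$ by establishing the two inequalities $\le$ and $\ge$ separately, using the characterisation of weak majorization via the monotone functions $\phi_1,\dots,\phi_n$ (and the fact that $\phi_k$ is convex and positively homogeneous of degree $1$).

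\paragraph{Upper bound.}
First I would show $\Pi(x,y)\le \min_k \phi_k(y)/\phi_k(x)$. Suppose $px+(1-p)z\prec y$ for some $z\in\Delta_{n-1}$; by the remark at the start of Section~\ref{sec:weak} this is the same as $px+(1-p)z\prec_w y$, i.e.\ $\phi_k\big(px+(1-p)z\big)\le\phi_k(y)$ for every $k=1,\dots,n$. Since each $\phi_k$ is convex (a sum of the $k$ largest coordinates is a max over index sets of linear functionals, hence convex) and positively homogeneous, and since $z$ has nonnegative entries so $\phi_k(z)\ge 0$, we get
\be
p\,\phi_k(x)\le p\,\phi_k(x)+(1-p)\phi_k(z)=\phi_k\big(px+(1-p)z\big)\le\phi_k(y),
\ee
hence $p\le \phi_k(y)/\phi_k(x)$ for each $k$ (note $\phi_k(x)>0$ since $x\in\Delta_{n-1}$ has $\phi_k(x)\ge x_1^\downarrow>0$). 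Taking the minimum over $k$ and then the supremum over admissible $p$ gives the bound.

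\paragraph{Lower bound.}
For the reverse inequality, set $p^\ast:=\min_k \phi_k(y)/\phi_k(x)$; I must exhibit $z\in\Delta_{n-1}$ with $p^\ast x+(1-p^\ast)z\prec_w y$. The natural candidate is to "fill up" the deficit between $p^\ast x^\downarrow$ and $y^\downarrow$. Working with the ordered vectors, consider the partial-sum (Lorenz-type) curve: define $w$ by its partial sums $\sum_{i\le k} w_i:=\max\{\,p^\ast\phi_k(x),\ \phi_{k-1}(y)+?\,\}$ — more cleanly, let $c_k$ be the least concave... actually the cleanest route is: let $v:=p^\ast x$; then $\phi_k(v)\le\phi_k(y)$ for all $k$ with equality for at least one $k$, and $\phi_n(v)=p^\ast\le 1=\phi_n(y)$. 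I want to add a nonnegative vector $u$ with $\sum_i u_i=1-p^\ast$ such that $v+u\prec_w y$; choosing $z=u/(1-p^\ast)$ then finishes it (if $p^\ast=1$ then $x\prec_w y$ already and any $z$ works). Existence of such a $u$ follows because $\prec_w y$ is exactly the set $\gamma(y,\prec_w)$, a compact convex down-set for the weak-majorization order containing $0$ and containing $y$, and $v$ lies inside it; one shows the "slice" of vectors in $\gamma(y,\prec_w)$ with coordinate sum $1$ is nonempty and $\succ_w v$, e.g.\ by a continuity/intermediate-value argument along the segment from $v$ to $y$, or explicitly by the greedy water-filling construction on the ordered partial sums. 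I would present the explicit construction: define the ordered partial sums of $v+u$ to be $S_k:=\min\{\phi_k(y),\ \max(p^\ast\phi_k(x),\ S_{k-1}+ \text{(remaining mass)/(remaining slots) adjustments})\}$, then verify monotonicity of the increments and the sum constraint. The key point is that at the index $k^\ast$ achieving the minimum in $p^\ast$ we have $S_{k^\ast}=\phi_{k^\ast}(y)$, so there is exactly enough room.

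\paragraph{Main obstacle.}
The routine part is the upper bound; the real work is the lower bound, specifically producing $z$ explicitly and checking that $p^\ast x+(1-p^\ast)z$ has a valid majorisation profile — i.e.\ that the constructed partial-sum sequence $S_k$ is concave-enough (increments nonincreasing after reordering) and that total mass is exactly $1$, so that $z\in\Delta_{n-1}$. I expect this bookkeeping with ordered partial sums and the water-filling argument to be the crux; an alternative, perhaps slicker, argument is a compactness/LP-duality argument (the quantity $\Pi(x,y)$ is the value of a linear program in $(p,z)$, and $\min_k\phi_k(y)/\phi_k(x)$ is the value of its dual), but I would likely keep the elementary constructive proof for transparency and because it mirrors the UT-majorization analogue in Proposition~\ref{prop:VidalUT}.
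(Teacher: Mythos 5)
Your upper bound follows the same route as the paper, but there is a technical slip in the displayed chain
\[
p\,\phi_k(x)\le p\,\phi_k(x)+(1-p)\phi_k(z)=\phi_k\bigl(px+(1-p)z\bigr)\le\phi_k(y).
\]
The middle equality is false in general: $\phi_k$ is convex (max of linear functionals), so the correct relation is $\phi_k\bigl(px+(1-p)z\bigr)\le p\,\phi_k(x)+(1-p)\phi_k(z)$, which points the wrong way for you and breaks the chain. What you actually need is \emph{monotonicity} of $\phi_k$ on $\R_+^n$: since $(1-p)z\ge0$ coordinate-wise, $px\le px+(1-p)z$, hence $p\,\phi_k(x)=\phi_k(px)\le\phi_k\bigl(px+(1-p)z\bigr)\le\phi_k(y)$. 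This is exactly the step the paper uses (``$\phi_k(p'x+(1-p')z)\ge\phi_k(p'x)$'') in its proof-by-contradiction version of the same bound. Convexity and the nonnegativity of $\phi_k(z)$ are not what make this work.

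The lower bound is where your proposal genuinely diverges from the paper, and where the real gap lies. The paper's key lemma is a theorem attributed to von Neumann (MOA, p.~37): every weakly doubly stochastic matrix $P$ is dominated entry-wise by a doubly stochastic matrix $D$. Having first shown $p^*x=yP$ for some wDSM $P$, the paper picks such a $D\ge P$, observes that $D-P$ is again weakly doubly stochastic, and sets $z=\frac{1}{1-p^*}y(D-P)\in\Delta_{n-1}$, so that $yD=p^*x+(1-p^*)z$ is a one-line finish. Your water-filling/intermediate-value sketch aims at the same conclusion by a constructive route, but as written it is not a proof: the construction of $u$ is never pinned down (you hedge between several candidate definitions of the partial sums $S_k$), and the verification you explicitly defer is the whole content of the lemma --- that the proposed $S_k$ come from a genuine nonnegative vector with nonincreasing increments after reordering, that the total mass is exactly $1$, and, subtly, that adding $u$ to $v=p^*x$ does not scramble the ordering you implicitly used to define $S_k$. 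If you want to avoid the von Neumann theorem, the cleanest replacement is the standard completion lemma for weak submajorization (see MOA, Ch.~5): $v\prec_w y$ if and only if there exists $w\ge v$ coordinate-wise with $w\prec y$; applying it to $v=p^*x$ produces $u:=w-v\ge0$ with $\sum_i u_i=1-p^*$, and $z=u/(1-p^*)\in\Delta_{n-1}$ does the job. Without that lemma, von Neumann's theorem, or a fully worked LP-duality argument, the lower bound as presented is not established.
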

\begin{proof}
Let $p^*=\min_{1\leq k \leq n}\frac{\phi_k(y)}{\phi_k(x)}$. 
We first show that
\begin{equation}\label{eq:Pi}
p^*=\max\{p\in[0,1]:px\prec_w y\}.
\end{equation}
 Let $k^*\in\{1,\ldots,n\}$ such that $p^*=\frac{\phi_{k^*}(y)}{\phi_{k^*}(x)}$. Then,
$$
\phi_k(p^*x)=p^*\phi_k(x)=\left(\frac{\phi_{k^*}(y)}{\phi_{k^*}(x)}\right)\phi_k(x)\leq \left(\frac{\phi_{k}(y)} {\phi_{k}(x)}\right)\phi_k(x)=\phi_k(y), \, \text{for all $k=1,\ldots,n$}.
$$
Hence $p^*x\prec_{w}y$. Now suppose that there is $p'>p^*$ such that $p'x \prec_{w}y$. Then $\phi_k(p'x)>\phi_k(p^*x)$ for all $k=1,\ldots,n$. Choosing $k=k^*$,
$$
\phi_{k^*}(p'x)>\phi_{k^*}(p^*x)=p^*\phi_{k^*}(x)=\phi_{k^*}(y).
$$
 we get a contradiction. Hence $p^*$ is the largest value $p\in[0,1]$ such that $px\prec_w y$, i.e., such that
 $
 yP=px
 $
 for some wDSM $P$.
 \par
 
 We can now prove the claim~\eqref{eq:propVidalMaj}. The case $p^*=1$ is trivial. Assume $p^*<1$. Reasoning as above, we first show that $\Pi(x,y)\leq p^*$. Suppose by contradiction that there is $p'>p^*$ such that $p'x+(1-p')z\prec y$, for some $z\in\Delta_{n-1}$. Then, $\phi_k(y)\geq \phi_k(p'x+(1-p')z)\geq \phi_k(p'x) >\phi_k(p^*x)$ for all $k=1,\ldots,n$ and, choosing $k=k^*$, we get a contradiction. 
 \par 
 
 It remains to show that there exists a DSM $D$ such that $yD=p^*x+(1-p^*)z$ with $z\in\Delta_{n-1}$.  By~\eqref{eq:Pi} we know that $yP=p^*x$ with $P$ wDSM. A theorem attributed to von Neumann (see~\cite[P. 37]{MOA}) ensures that there exists a DSM $D$ such that $P_{ij}\leq D_{ij}$ for all $i,j=1,\ldots,n$. Note that $P':=D-P$ is also a wDSM. Then,
 \be
 yD=yP+y(D-P)=p^*x+(1-p^*)z,
 \ee
 where $z=\frac{1}{1-p^*}y(D-P)\in\Delta_{n-1}$.
\end{proof}
\begin{figure}
 \includegraphics[width=\textwidth]{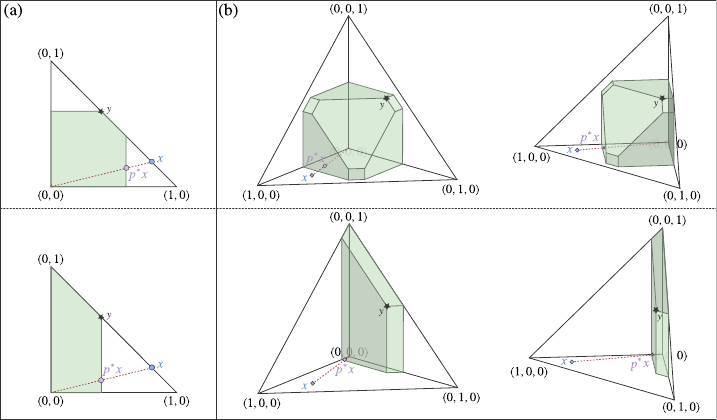}
       \captionsetup{width=1\linewidth}  
	\caption{The antecedent set of $y$ is shown for both the weak (top panel) and weak UT (bottom panel) majorizations for (a) $x = \left(\frac{8}{10}, \frac{2}{10}\right)$ (blue circle) and $y = \left(\frac{2}{5}, \frac{3}{5}\right)$ (black star), with $p^{*}x$ (pink circle) lying on the boundary of the antecedent set, and (b) $x = \left(\frac{7}{10}, \frac{1}{4}, \frac{1}{20}\right)$ (blue circle) and $y = \left(\frac{1}{12}, \frac{5}{12}, \frac{1}{2}\right)$ (black star), again with $p^{*}x$ (pink circle) lying on the boundary of the antecedent set.}
	\label{fig:example-3}
\end{figure}

Figure~\ref{fig:example-3} illustrates Proposition 1 by showing how the optimal value $p^*$ of $p$ in the expression $p x + (1 - p) z \prec y$ corresponds to the boundary of the antecedent set of $y$ under  weak majorization (top panel) and  weak UT-majorization (bottom panel). In both cases, the point $p^* x$ (shown in pink) lies on the boundary of the shaded region, confirming that it is the largest value for which $p x$ alone remains weakly majorized (UT-majorized) by $y$. See~\eqref{eq:Pi}. 

Note that
\be
1/n\leq \Pi(x,y)\leq 1.
\ee
The minimum is achieved for $x=e_1=(1,0,\dots,0)$ and $y=(1/n,\ldots,1/n)$.
On the other hand, $\Pi(x,y)=1$ if and only if $x\prec y$, and so 
\be
\Gamma(\Delta_{n-1},\prec)=\{(x,y)\in\Delta_{n-1}\times\Delta_{n-1}\colon \Pi(x,y)=1\}.
\ee 
Define
\be
P(\Pi(X,Y)\leq t)=\mu\otimes\mu\left(\left\{(x,y)\in\Delta_{n-1}\times\Delta_{n-1}\colon \Pi(x,y)\leq t\right\}\right).
\ee
The next result is a restatement of Eq.~\eqref{eq:main2_maj} of Theorem~\ref{thm:intro_maj}.
\begin{thm}
\label{thm:limit} Let $\mu_n$ be the uniform probability measure on the simplex $\Delta_{n-1}$. Then, 
\be
\label{eq:conv_1}
\lim_{n\to\infty}\Pi(X,Y)=1,
\ee
in probability.
\end{thm}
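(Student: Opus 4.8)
The plan is to start from the explicit formula $\Pi(x,y)=\min_{1\le k\le n}\phi_k(y)/\phi_k(x)$ of Proposition~\ref{prop:VidalMaj}, where $\phi_k(x)=\sum_{i=1}^k x_i^\downarrow$ (note $\phi_k(x)\ge\phi_1(x)\ge 1/n>0$, so no ratio is ever degenerate). Since $\phi_n\equiv 1$ on $\Delta_{n-1}$ we always have $\Pi\le 1$, so $\Pi(X,Y)\to 1$ in probability is equivalent to: for every $\epsilon>0$,
\be
\PP\bigl(\exists\,k\le n-1:\ \phi_k(Y)<(1-\epsilon)\,\phi_k(X)\bigr)\longrightarrow 0.
\ee
I would realise the uniform vectors as $X=E/S$ and $Y=F/S'$ with $E,F$ having independent $\mathrm{Exp}(1)$ coordinates and $S=\sum_iE_i$, $S'=\sum_iF_i$. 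Writing $\sigma_k(E)$ for the sum of the $k$ largest coordinates of $E$ and $E_{(j)}$ for the $j$th smallest, one has $\phi_k(X)=\sigma_k(E)/S$ and $\phi_k(Y)/\phi_k(X)=(S/S')\,\sigma_k(F)/\sigma_k(E)$. Since $S/S'\to 1$ in probability, the statement reduces to the uniform \emph{relative} concentration
\be
\sup_{1\le k\le n}\Bigl|\frac{\sigma_k(E)}{m_k}-1\Bigr|\longrightarrow 0\ \text{ in probability},\qquad m_k:=\EE\,\sigma_k(E)=k(H_n-H_{k-1})+(k-1),
\ee
together with its analogue for $F$: granting these, pick a small $\delta=\delta(\epsilon)$, work on the intersection of $\{|S/S'-1|\le\delta\}$ with the two concentration events at level $\delta$ (an event of probability $\to 1$), and use $\sigma_k(E)\le(1+\delta)m_k$, $\sigma_k(F)\ge(1-\delta)m_k$ to get $\phi_k(Y)/\phi_k(X)\ge(1-\delta)\tfrac{1-\delta}{1+\delta}\ge 1-\epsilon$ for all $k\le n-1$; as $\Pi\le 1$, this finishes the proof.

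The crux is the concentration display. I would use the elementary bounds $m_k\ge\tfrac12\,k\bigl(1+\log(n/k)\bigr)$ for all $1\le k\le n$ and $m_k/k\ge\log n-O(\log\log n)$ for $k\le\log^2 n$, which convert additive fluctuations into $o(1)$ relative ones, and split $k$ into a bulk range $k\ge\log^2 n$ and an edge range $k<\log^2 n$. On the bulk, the Rényi representation of exponential order statistics gives $\sigma_k(E)\eqlaw k\sum_{m=k}^n V_m/m+\sum_{i=1}^{k-1}W_i$ with independent $\mathrm{Exp}(1)$ variables; its summands are uniformly sub-exponential and its variance is $O(k)$, so Bernstein's inequality with a deviation of order $\sqrt{k\log n}$, followed by a union bound over the $\le n$ values of $k$, yields $|\sigma_k(E)-m_k|=O(\sqrt{k\log n})$ uniformly with probability $\to 1$; dividing by $m_k$ and using $\sqrt k\,(1+\log(n/k))\ge 1+\log n$ on $[1,n]$ gives the claimed $o(1)$ relative bound there. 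On the edge I would sandwich $k\,E_{(n-k+1)}\le\sigma_k(E)\le k\,E_{(n)}$ and control the extreme order statistics directly: $\PP\bigl(E_{(n)}>\log n+4\log\log n\bigr)\le(\log n)^{-4}$ is a single event handling the upper bound for all edge $k$ at once, and $\PP\bigl(E_{(n-k+1)}<\EE\,E_{(n-k+1)}-C\log\log n\bigr)\le n^{-c}$ — a Chernoff bound for the binomial count $\#\{i:E_i>b\}$, whose mean at the relevant threshold is $\asymp k(\log n)^{C}$ — handles the lower bound after a union bound over the $\log^2 n$ edge values of $k$; since $m_k-k\,\EE\,E_{(n-k+1)}=k-1=o(m_k)$ on this range, these two estimates sharpen into the desired relative control.

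The main obstacle is precisely this edge regime: at small $k$ the partial sum $\sigma_k$ is governed by the top few order statistics of $n$ independent exponentials, whose relative fluctuations — although $o(1)$ — cannot be controlled by a single Bernstein estimate plus a crude union bound over all $k$ (for $k=O(1)$, $m_k\asymp\log n$ is simply too small to absorb the deviation needed to beat a $1/n$-sized union bound), so one is forced into sharp extreme-value tail estimates and into exploiting that the edge contains only polylogarithmically many $k$. On the bulk one could alternatively bypass Bernstein by invoking Glivenko–Cantelli for the empirical law of $(nX_i)_{i=1}^n$, which gives $\sigma_{\lfloor\alpha n\rfloor}(E)/n\to g(\alpha):=\alpha(1-\log\alpha)$ uniformly for $\alpha\in[\delta,1]$; but that trick does nothing at the edge, so the extreme-value analysis there is unavoidable.
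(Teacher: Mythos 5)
Your proposal is correct and shares the skeleton of the paper's proof (start from Proposition~\ref{prop:VidalMaj}, pass to the exponential representation of Proposition~\ref{prop:repr_phik}, apply Bernstein, take a union bound over $k$), but the organization of the union bound is genuinely different. The paper treats every $k$ on the same footing: the effective degrees of freedom appearing in Bernstein satisfy $A_k\gtrsim(\log n)^2+k$ and $B_k\gtrsim\log n+k$ (Lemma~\ref{lem:bounds}), so the union-bound terms $\delta_{n,k}$ decay \emph{geometrically} in $k$; the sum over $k$ is therefore $O(\delta_{n,1})\asymp n^{-c\epsilon}+e^{-c\epsilon^2(\log n)^2/2}\to 0$, and no bulk/edge split is needed. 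You instead aim for the stronger uniform-relative-concentration statement $\sup_k|\sigma_k/m_k-1|\to 0$ and, to get it, split into a bulk range (Bernstein with deviation $\sqrt{k\log n}$) and an edge range $k<\log^2 n$ handled by extreme-value tails of $E_{(n)}$ and $E_{(n-k+1)}$. That route is sound, but its motivating premise deserves a caveat.

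Specifically, your claim that small $k$ ``cannot be controlled by a single Bernstein estimate plus a crude union bound over all $k$'' presupposes that each term in the union bound must be $o(1/n)$. It need not be. At $k=1$ the Bernstein bound gives only $\sim n^{-c\epsilon}$, which is far from $o(1/n)$ for small $\epsilon$; but since the per-$k$ bound decays geometrically, the whole sum $\sum_k\delta_{n,k}$ is still $O(n^{-c\epsilon})\to 0$. This is exactly the device that lets the paper avoid your edge regime and its extreme-value estimates entirely. So your edge analysis is a correct alternative but not, as you assert, ``unavoidable.'' Beyond that, your bulk bound is serviceable but weaker than the paper's: your inequality $\sqrt k\,(1+\log(n/k))\ge 1+\log n$ only gives $B_k\gtrsim\sqrt k\,\log n$, which does not produce the geometric decay in $k$ that closes the union bound in one shot; that is precisely why you are forced into the split, whereas the paper's $B_k\ge\log n+\beta_n(k-1)$ is the stronger lemma that makes the unified summation work. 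If you want the cleanest argument you could simply prove the linear-in-$k$ lower bounds of Lemma~\ref{lem:bounds} and drop the bulk/edge decomposition altogether.
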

The proof is based on formula~\eqref{eq:propVidalMaj}. The idea is that if $X$ is uniform in the simplex, then for each fixed $\epsilon>0$, $\phi_k(X)$ is essentially equal to $(1 \pm \epsilon)$ times its mean, aside from an event with probability that is small even when we sum over $k=1,\ldots,n$.  This then shows that $\phi_k(X)/ \phi_k(Y) = (1\pm \epsilon)$ for all $k$ with probability tending to $1$.  Since this holds for each $\epsilon > 0$, it shows $\Pi(X,Y)$ converges to $1$ in probability.
The rate of convergence (i.e. how small $\epsilon$ can be taken it we allow it to decrease to $0$ as $n\to\infty$) seems quite slow; from the proof is appears that $\epsilon$ must be at least $1/\mathrm{poly}(\log n)$. It is therefore tricky to see evidence of the convergence~\eqref{eq:conv_1} in a simulation without taking $n$ to be fairly large.  See \figurename~\ref{fig:LimitDistributions}).
\begin{rem}
A  function similar to $\Pi(\cdot,\cdot)$ can be considered for the converse relation $\succ$, defined analogously to \eqref{eq:Pi}, but in terms of the sums of the smallest components of $x$ and $y$. Such a function appears in the resource theory of entanglement and its statistical behaviour was studied in~\cite{cunden2020,Jain24}.
\end{rem}
\begin{figure}
        \includegraphics[width=.49\textwidth]{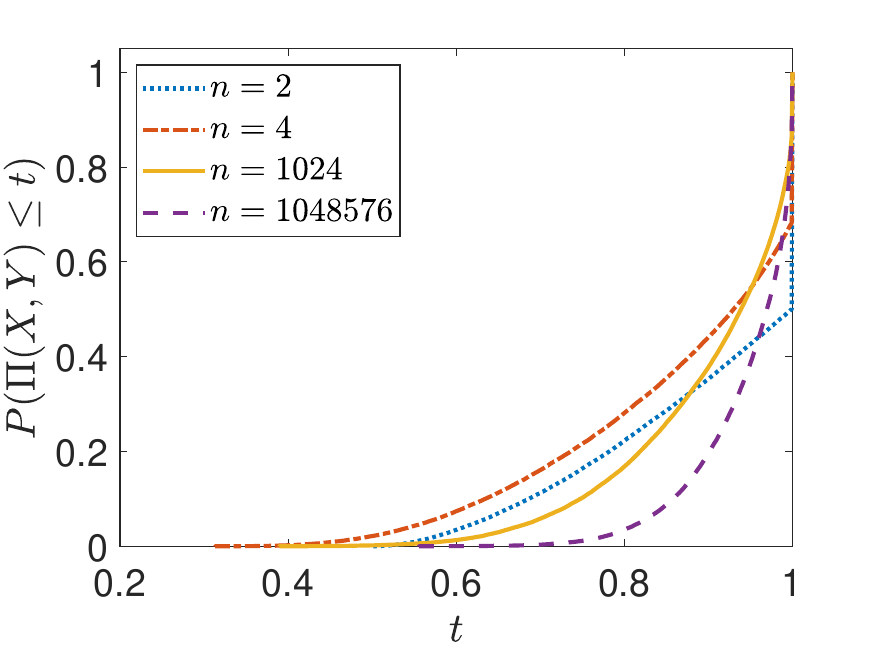}~
        \includegraphics[width=.49\textwidth]{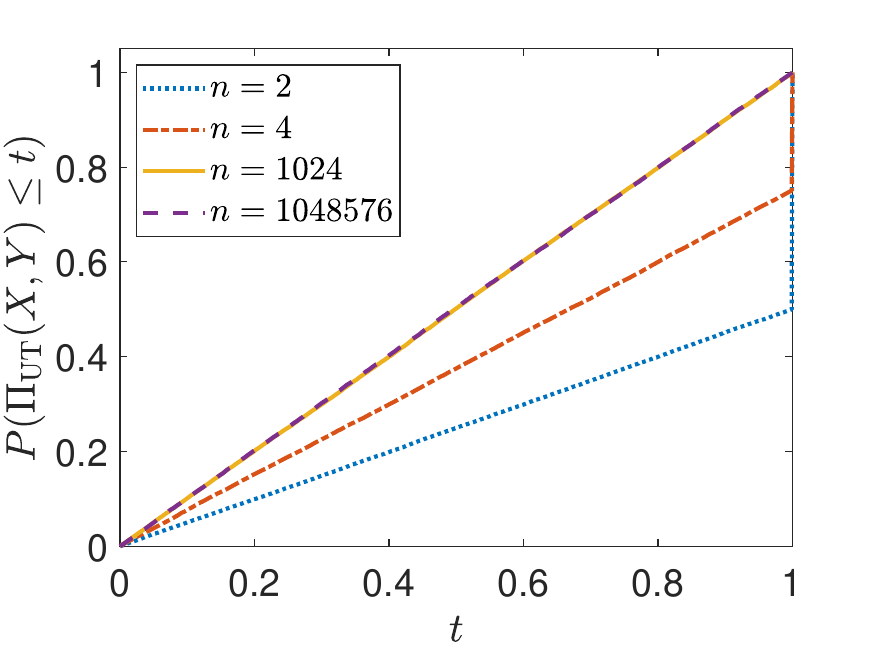}
        \captionsetup{width=1\linewidth}  
	\caption{Empirical cumulative distribution of (a) $\Pi(X,Y)$ and (b) $\Pi_{\text{UT}}(X,Y)$, obtained from ${10^5}$ samples of uniformly random pairs $(x_i, y_i)$ in the simplex $\Delta_{n-1}$.{One can see that the distribution of $\Pi(X,Y)$ concentrates at $1$ as $n\to\infty$, at a slow rate.}}
    \label{fig:LimitDistributions}
\end{figure}

\textit{Mutatis mutandis}, the same question can be posed for UT-majorization. 
Let 
\be
\Pi_{\UT}:\Delta_{n-1}\times \Delta_{n-1}\to[0,1],
\ee
defined as
\be
\Pi_{\UT}(x,y):=\max\left\{p\in[0,1]\colon px+(1-p)z\prec^{\UT} y,\quad\text{for some $z\in\Delta_{n-1}$}\right\}.
\ee 
We have the following simple representation whose proof mimicks the one of Proposition~\ref{prop:VidalMaj}.
\begin{prop}
\label{prop:VidalUT}
\be
\Pi_{\UT}(x,y)=\min_{1\leq k\leq n}\frac{\phi_k^{\UT}(y)}{\phi_k^{\UT}(x)}.
	\ee
where $\phi_k^{\UT}$ are defined in~\eqref{eq:frameUT}.\
\end{prop}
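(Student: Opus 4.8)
The plan is to follow the proof of Proposition~\ref{prop:VidalMaj} almost verbatim, replacing the monotones $\phi_k$ by the partial sums $\phi_k^{\UT}$, doubly stochastic matrices by UTSMs, weakly doubly stochastic matrices by wUTSMs, and von Neumann's completion theorem by its elementary upper-triangular analogue. Set
\[
p^*=\min_{1\le k\le n}\frac{\phi_k^{\UT}(y)}{\phi_k^{\UT}(x)},
\]
with the convention that a term with $\phi_k^{\UT}(x)=0$ is dropped (it imposes no constraint below, since then $\phi_k^{\UT}(y)\ge0$ anyway); the minimum is over a nonempty index set because $\phi_n^{\UT}(x)=\sum_i x_i=1>0$, and the choice $k=n$ shows $p^*\le1$. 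The only structural facts entering the argument are the linearity $\phi_k^{\UT}(\alpha u+\beta v)=\alpha\phi_k^{\UT}(u)+\beta\phi_k^{\UT}(v)$, the positivity $\phi_k^{\UT}(z)\ge0$ for $z\in\Delta_{n-1}$, and the characterisation recalled in Section~\ref{sec:background} (around~\eqref{eq:frameUT}) that $u\prec_w^{\UT}v$ is equivalent to $\phi_k^{\UT}(u)\le\phi_k^{\UT}(v)$ for $k=1,\dots,n$, and to $u=vP$ for some wUTSM $P$.

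First I would record the analogue of~\eqref{eq:Pi}, namely $p^*=\max\{p\in[0,1]\colon px\prec_w^{\UT}y\}$: if $k^*$ attains the minimum, then $\phi_k^{\UT}(p^*x)=p^*\phi_k^{\UT}(x)\le\phi_k^{\UT}(y)$ for all $k$ by definition of $p^*$, so $p^*x\prec_w^{\UT}y$, and no $p'>p^*$ works since the inequality fails at $k=k^*$. The upper bound $\Pi_{\UT}(x,y)\le p^*$ then follows exactly as before: if $p'x+(1-p')z\prec^{\UT}y$ with $z\in\Delta_{n-1}$, then for every $k$,
\[
\phi_k^{\UT}(y)\ge\phi_k^{\UT}\bigl(p'x+(1-p')z\bigr)=p'\phi_k^{\UT}(x)+(1-p')\phi_k^{\UT}(z)\ge p'\phi_k^{\UT}(x),
\]
and evaluating at $k=k^*$ gives $p'\le p^*$. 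The case $p^*=1$ is just $x\prec^{\UT}y$ (the inequalities $\phi_k^{\UT}(x)\le\phi_k^{\UT}(y)$ for all $k$, with $\phi_n^{\UT}(x)=\phi_n^{\UT}(y)=1$), hence trivial; assume henceforth $p^*<1$.

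For the matching lower bound $\Pi_{\UT}(x,y)\ge p^*$ I would exhibit a UTSM attaining it. By the previous step there is a wUTSM $P$ with $yP=p^*x$. Build $M$ from $P$ by changing only the diagonal, replacing $P_{ii}$ with $P_{ii}+\bigl(1-\sum_{j\ge i}P_{ij}\bigr)$; since the extra mass sits on the diagonal, $M$ is still nonnegative and upper triangular and now has all row sums equal to $1$, so $M\in\SS^{\UT}$, while $P':=M-P$ is nonnegative, upper triangular, with row sums in $[0,1]$, hence a wUTSM. Writing $w:=yP'$ we get $yM=yP+yP'=p^*x+w$ with $w_j\ge0$ and
\[
\sum_j w_j=\sum_i y_i\Bigl(1-\sum_{j\ge i}P_{ij}\Bigr)=\sum_i y_i-\sum_j(yP)_j=1-p^*\sum_j x_j=1-p^*,
\]
using $\sum_i y_i=1$, $\sum_i x_i=1$, and $\sum_i\sum_{j\ge i}y_iP_{ij}=\sum_j(yP)_j$ (valid since $P_{ij}=0$ for $i>j$). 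Hence $z:=w/(1-p^*)\in\Delta_{n-1}$ and $yM=p^*x+(1-p^*)z$, so $\Pi_{\UT}(x,y)\ge p^*$, completing the proof.

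I do not anticipate a genuine obstacle — which is why the statement says the proof \emph{mimicks} that of Proposition~\ref{prop:VidalMaj}. The two points deserving a word of care are: (i) the convention for vanishing denominators $\phi_k^{\UT}(x)=0$, which genuinely can occur here (unlike in the ordered setting, where $\phi_k(x)\ge x_1^\downarrow>0$ for a probability vector); and (ii) the substitute for the von Neumann completion theorem — but this is actually easier in the upper-triangular world, since pushing each row's deficit onto any entry in its admissible support (say the diagonal) preserves upper-triangularity, so the entrywise completion of a wUTSM to a UTSM is immediate and no nontrivial theorem is needed.
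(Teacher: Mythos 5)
Your proof is correct and realizes exactly what the paper intends when it says the argument ``mimicks'' that of Proposition~\ref{prop:VidalMaj}: the only place the two arguments could genuinely diverge is the completion of a weakly stochastic matrix to a stochastic one, and your observation that pushing each row's deficit onto the diagonal does this trivially for upper-triangular matrices is the right elementary substitute for the von Neumann theorem invoked in the doubly stochastic case. Your handling of possibly vanishing partial sums $\phi_k^{\UT}(x)$ is also appropriate and addresses a real difference from the ordered setting, where $\phi_k(x)\geq x_1^\downarrow>0$ rules the issue out.
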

In this case, $0\leq\Pi_{\UT}(x,y)\leq1$. The minimum $\Pi_{\UT}(x,y)=0$ is achieved for instance when $y_1=0$ but $x_1>0$, while $\Pi_{\UT}(x,y)=1$ if and only if $x\prec^{\UT} y$, so that
\be
\Gamma(\Delta_{n-1},\prec^{\UT})=\{(x,y)\in\Delta_{n-1}\times\Delta_{n-1}\colon \Pi_{\UT}(x,y)=1\}.
\ee 
Define
\be
P\left(\Pi_{\UT}(X,Y)\leq t\right)=\mu\otimes\mu\left(\left\{(x,y)\in\Delta_{n-1}\times\Delta_{n-1}\colon \Pi_{\UT}(x,y)\leq t\right\}\right).
\ee
When $X,Y$ are uniformly distributed, $\Pi_{\UT}(X,Y)$ has a (surprisingly simple!) explicit distribution.
\begin{thm} \label{thm:cumulativeDist} Let $\mu_n$ be the uniform measure on $\Delta_{n-1}$. Then, for all $n\geq2$, 
\be\label{eq:MainResult}
P\left(\Pi_{\UT}(X,Y)\leq t\right)=
\begin{cases}
0&\text{if $t\leq0$}\\
\left(1-\dfrac{1}{n}\right)t&\text{if $0<t<1$}\\
1&\text{if $t\geq1$}\\
\end{cases}
\ee
for all $t\in\R$. In particular, as $n\to\infty$, $\Pi_{\UT}(X,Y)$ converges in distribution to a uniform random variable in $[0,1]$.
\end{thm}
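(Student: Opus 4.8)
The plan is to combine Proposition~\ref{prop:VidalUT} with the order-statistics representation of the uniform measure on $\Delta_{n-1}$ and a reflection-principle count. The two extreme ranges are immediate: every $x\in\Delta_{n-1}$ has $\phi_n^{\UT}(x)=1$, so the $k=n$ term in Proposition~\ref{prop:VidalUT} is $1$ and $\Pi_{\UT}(X,Y)\le1$ surely, giving $P(\Pi_{\UT}(X,Y)\le t)=1$ for $t\ge1$; and $\Pi_{\UT}\ge0$ with $P(\Pi_{\UT}=0)=0$ since $\mu_n$ is absolutely continuous, giving $P(\Pi_{\UT}(X,Y)\le t)=0$ for $t\le0$. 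So fix $t\in(0,1)$.

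For $X\sim\mu_n$, the vector of partial sums $\bigl(\phi_1^{\UT}(X),\dots,\phi_{n-1}^{\UT}(X)\bigr)$ has the law of the order statistics $(U_{(1)},\dots,U_{(n-1)})$ of $n-1$ i.i.d.\ $\mathrm{Uniform}[0,1]$ variables (the stick-breaking identity of Appendix~\ref{app:stick}), and likewise for $Y$ with an independent sample $(V_{(1)},\dots,V_{(n-1)})$. Since the $k=n$ constraint in Proposition~\ref{prop:VidalUT} is vacuous,
\be
P\bigl(\Pi_{\UT}(X,Y)>t\bigr)=P\bigl(V_{(k)}>t\,U_{(k)}\ \text{for all }k=1,\dots,n-1\bigr).
\ee
Writing $a$'s for the scaled points $tU_1,\dots,tU_{n-1}$ (so $tU_{(k)}=(tU)_{(k)}$) and $b$'s for $V_1,\dots,V_{n-1}$, I would use the elementary comparison lemma: for $2m$ distinct reals, the sorted sequences satisfy $a_{(k)}<b_{(k)}$ for all $k$ if and only if, reading the merged sorted list in increasing order, every prefix contains at least as many $a$'s as $b$'s (a ballot-type condition; our $2(n-1)$ points are a.s.\ distinct). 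Thus the right-hand side above equals the probability that this prefix condition holds for a merged sample of $n-1$ points uniform on $[0,t]$ and $n-1$ points uniform on $[0,1]$.

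Now condition on $m:=\#\{j:V_j>t\}$, which is $\mathrm{Binomial}(n-1,1-t)$. Given $m$, the $m$ values $V_j>t$ are the $m$ largest elements of the merged list, sitting at the end of the increasing order; any prefix that reaches them still carries all $n-1$ $a$'s against at most $n-1$ $b$'s, so they never break the condition. The remaining $2(n-1)-m$ points --- the $n-1$ values $tU_i$ and the $n-1-m$ values $V_j\le t$ --- are all i.i.d.\ $\mathrm{Uniform}[0,t]$, hence in uniformly random relative order, and the prefix condition restricted to them is exactly the event that a lattice path with $n-1$ up-steps (the $a$'s) and $n-1-m$ down-steps (the $b$'s), run from $0$ to $m$, stays $\ge0$. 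Reflecting across level $-1$ counts the bad paths as $\binom{2(n-1)-m}{n}$ out of $\binom{2(n-1)-m}{n-1}$, so
\be
P\bigl(\Pi_{\UT}(X,Y)>t\mid m\bigr)=1-\frac{\binom{2(n-1)-m}{n}}{\binom{2(n-1)-m}{n-1}}=1-\frac{n-1-m}{n}=\frac{m+1}{n}.
\ee
Averaging over $m$ with $\EE[m]=(n-1)(1-t)$ yields $P(\Pi_{\UT}(X,Y)>t)=\tfrac1n\bigl(1+(n-1)(1-t)\bigr)=1-(1-\tfrac1n)t$, which is~\eqref{eq:MainResult}; and since $(1-\tfrac1n)t\to t$ for every $t$, the distribution functions converge pointwise to that of a $\mathrm{Uniform}[0,1]$ variable, i.e.\ $\Pi_{\UT}(X,Y)$ converges in distribution to a uniform. (One could instead condition on the $U$-sample and evaluate $P(V_{(k)}>tU_{(k)}\ \forall k\mid U)$ by Bolshev's recursion, Appendix~\ref{app:Bolshev}, then integrate; the argument above bypasses that computation.)

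The step I expect to be the main obstacle is the conditioning in the last stage: one must verify that the block of $m$ large $V$'s is harmless for \emph{every} prefix length, identify the correct reflection level and check the boundary case $m=n-1$ (no down-steps, probability $1$), and confirm that conditioning on $m$ genuinely leaves the other $2(n-1)-m$ values i.i.d.\ uniform on $[0,t]$ (so that their relative order is uniform). The comparison lemma of the middle step is standard but deserves a clean statement, together with the remark that ties form a null set under an absolutely continuous $\mu_n$.
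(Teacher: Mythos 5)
Your argument is correct and reaches~\eqref{eq:MainResult}, but it is a genuinely different route from the paper's. The paper also starts from Proposition~\ref{prop:VidalUT} and the order-statistics representation $\bigl(\phi_1^{\UT}(X),\dots,\phi_{n-1}^{\UT}(X)\bigr)\stackrel{d}{=}(U_{(n-1,1)},\dots,U_{(n-1,n-1)})$, but then conditions on the $X$-sample and writes $1-F_n(t)=E_S\bigl[P_{n-1}(tS_1,\dots,tS_{n-1})\bigr]$ with $S_k=X_1+\cdots+X_k$; it applies Bolshev's recursion~\eqref{eq:Bolshev} twice and observes that all the nested integrals beyond the leading term vanish, leaving $(n-1)tE_S[S_1]=(1-1/n)t$. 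You instead condition on $m=\#\{j:V_j>t\}\sim\mathrm{Binomial}(n-1,1-t)$, translate $V_{(k)}>tU_{(k)}$ for all $k$ into a prefix (ballot) condition on the merged sample via the pointwise sorted-comparison lemma, note the $m$ large $V$'s are harmless, and apply a cyclic-lemma-free reflection count on the remaining $2(n-1)-m$ i.i.d.\ $\mathrm{Uniform}[0,t]$ points to get $P(\Pi_{\UT}>t\mid m)=(m+1)/n$, and then average. Your reflection computation is right: with $u=n-1$ up-steps and $d=n-1-m$ down-steps the bad paths number $\binom{2(n-1)-m}{n}$, and $\binom{N}{n}/\binom{N}{n-1}=(N-n+1)/n=(n-1-m)/n$, giving $(m+1)/n$; the degenerate case $m=n-1$ is covered since $\binom{n-1}{n}=0$; and the conditioning facts (the $V_j\le t$ are conditionally i.i.d.\ $\mathrm{Uniform}[0,t]$, independence of the $tU_i$, a.s.\ distinctness) are all standard. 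The sorted-comparison lemma is exactly the classical equivalence $a_{(k)}<b_{(k)}\ \forall k\ \Leftrightarrow$ every prefix of the merged order has at least as many $a$'s as $b$'s, which holds a.s.\ here. What each approach buys: yours makes the answer $(1-1/n)t$ visible as a clean average of $(m+1)/n$ over a binomial, is more elementary and arguably more illuminating; the paper's Bolshev route is more computational but sets up exactly the machinery (stick-breaking plus order-statistic probabilities) that it then reuses to bound the Dirichlet case in Proposition~\ref{prop:Dirichlet}.
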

The Theorem above corresponds to  Eq.~\eqref{eq:main2} of Theorem~\ref{thm:intro}. 
When $\mu_n$ is the uniform measure, Theorem~\ref{thm:VanishingProb} is  a Corollary of Theorem~\ref{thm:cumulativeDist}. 
\begin{rem}
Equation~\eqref{eq:MainResult} can be expressed in purely geometrical terms. The volume of the $2(n-1)$-dimensional polytope
\be
\left\{(x,y)\in\Delta_{n-1}\times\Delta_{n-1}\colon t (x_1+\cdots+x_k)\leq y_1+\cdots+y_k\,\text{for all $1\leq k \leq n$}\right\}
\ee
is the right-hand side of~\eqref{eq:MainResult}  times the volume of $\Delta_{n-1}\times\Delta_{n-1}$ (that is $(n-1)!\times (n-1)!$). A related polytope was studied in detail by Stanley and Pitman in~\cite{Stanley02}, and their paper was a great source of information in our investigations.
\end{rem}
 
Note that~\eqref{eq:MainResult} is \emph{not} a universal (distribution independent) result. 
We have for instance considered the family of permutation invariant Dirichlet distribution {$\mu_n=\operatorname{Dir}(\alpha,\ldots,\alpha)$}, with density on the unit simplex given by
\be
f_{X_1,\dots,X_{n}}(x)=\frac{\Gamma(n\alpha)}{\Gamma(\alpha)^n} \prod_{i=1}^n x_i^{\alpha-1}\bm{1}_{\Delta_{n-1}}(x).
\ee

The above family includes the uniform distribution on $\Delta_{n-1}$ for $\alpha=1$. One can verify by explicit computation for small $n$, that the distribution of $\Pi_{\UT}(X,Y)$ when $\alpha\neq1$ is not given by~\eqref{eq:MainResult}. For example, when $n=3$ and $\alpha=2$,
$$
P\left(\Pi_{\UT}(X,Y)\leq t\right)=
\frac{10}{7} t^2 - \frac{10}{7} t^3 + \frac{10}{9} t^4 - \frac{4}{9} t^5,\quad\text{ $0<t<1$}.
$$
We did not manage to unveil a closed form for general $\alpha$ and $n$. However, using the known `stick-breaking' representation of Dirichlet distributed random variables {(see Appendix~\ref{app:stick})  one can get the following simple bound.}
\begin{prop} 
\label{prop:Dirichlet}
Let {$\mu_n=\operatorname{Dir}(\alpha,\ldots,\alpha)$} on $\Delta_{n-1}$. Then, for all $n\geq2$,
\be
P\left(\Pi_{\UT}(X,Y)\leq t\right)\leq \left(1-\frac{1}{\alpha n}\right)t.
\ee
\end{prop}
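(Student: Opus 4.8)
The plan is to exploit the characterization $\Pi_{\UT}(x,y)=\min_{1\le k\le n}\phi_k^{\UT}(y)/\phi_k^{\UT}(x)$ from Proposition~\ref{prop:VidalUT} together with the stick-breaking representation of $\operatorname{Dir}(\alpha,\dots,\alpha)$ recalled in Appendix~\ref{app:stick}. Under stick-breaking, if $X\sim\operatorname{Dir}(\alpha,\dots,\alpha)$, then writing the partial sums $F_k^X:=\phi_k^{\UT}(X)=X_1+\cdots+X_k$, the \emph{first} coordinate $X_1=F_1^X$ has a $\operatorname{Beta}(\alpha,(n-1)\alpha)$ distribution, and more generally the ratios appearing in the stick-breaking construction are independent Betas. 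The key observation is that, instead of controlling the whole minimum over $k$, it suffices to bound $\Pi_{\UT}(X,Y)$ from above by the single term $k=1$: namely $\Pi_{\UT}(X,Y)\le Y_1/X_1$. Hence
\[
P\left(\Pi_{\UT}(X,Y)\le t\right)\ge \dots
\]
no — we need the inequality in the direction $\le$, so we instead keep the full minimum but bound it \emph{below} by using a conditioning/coupling argument. Let me restate the right approach.

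First I would write, using Proposition~\ref{prop:VidalUT},
\[
\{\Pi_{\UT}(X,Y)\le t\}=\bigcup_{k=1}^{n}\{\phi_k^{\UT}(Y)\le t\,\phi_k^{\UT}(X)\},
\]
and then integrate in a way that mirrors the uniform-case proof of Theorem~\ref{thm:cumulativeDist}, but now with the Dirichlet density. The clean way is to condition on $X$ and compute $P(\Pi_{\UT}(X,Y)\le t\mid X=x)$ exactly as a function of $x$, then take the expectation over $x\sim\operatorname{Dir}(\alpha,\dots,\alpha)$. For uniform $\mu_n$ the conditional probability turns out to be exactly $(1-\tfrac1n)t$ for a.e.\ $x$, which is why the answer is universal there; for general $\alpha$ it will depend on $x$, but I expect a clean upper bound of the form $P(\Pi_{\UT}(X,Y)\le t\mid X=x)\le t\cdot g(x)$ where $g(x)$ has expectation $1-\tfrac{1}{\alpha n}$. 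The natural candidate is to peel off the constraint at $k=1$: the event forces $Y_1\le t X_1$ among other things, and by inclusion the conditional probability is at most $P(Y_1\le tX_1\mid X=x)+\text{(corrections)}$; since $Y_1\sim\operatorname{Beta}(\alpha,(n-1)\alpha)$, whose density near $0$ behaves like $\tfrac{\Gamma(n\alpha)}{\Gamma(\alpha)\Gamma((n-1)\alpha)}y^{\alpha-1}$, one gets a contribution of order $t^\alpha x_1^\alpha$, which is \emph{not} linear in $t$ unless $\alpha=1$. So a single-coordinate bound is too lossy; the real mechanism must use the \emph{whole chain} of partial sums.

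The approach I would actually pursue is the stick-breaking recursion for the \emph{reversed} partial sums, paralleling Bolshev's recursion in Appendix~\ref{app:Bolshev}. Write $R_k^X:=X_{k}+X_{k+1}+\cdots+X_n=1-F_{k-1}^X$. Then the constraints $\phi_k^{\UT}(Y)\le t\,\phi_k^{\UT}(X)$ for $k=1,\dots,n-1$, together with $\phi_n^{\UT}(Y)=\phi_n^{\UT}(X)=1$, can be organized along the stick-breaking filtration. Condition first on $X_1$ and $Y_1$; on the event $Y_1>tX_1$ (which is the only way the $k=1$ constraint can fail, i.e.\ the only way we can still have $\Pi_{\UT}>t$ at this level), renormalize $(X_2,\dots,X_n)/R_2^X\sim\operatorname{Dir}(\alpha,\dots,\alpha)$ on $\Delta_{n-2}$ and likewise for $Y$, and recurse. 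This yields a recursion
\[
q_n(t)\;:=\;P\bigl(\Pi_{\UT}(X,Y)>t\bigr)\;=\;\EE\Bigl[\mathbbm{1}\{Y_1>tX_1\}\,q_{n-1}\!\bigl(\tilde t\bigr)\Bigr],
\]
where $\tilde t = t\,R_2^X/R_2^Y$ is the rescaled threshold and $X_1\sim\operatorname{Beta}(\alpha,(n-1)\alpha)$, $Y_1\sim\operatorname{Beta}(\alpha,(n-1)\alpha)$ independently (and $R_2^X=1-X_1$, etc.). The base case is $q_2(t)=P(Y_1>tX_1)$ with both $\operatorname{Beta}(\alpha,\alpha)$, which one checks satisfies $q_2(t)\ge 1-(1-\tfrac{1}{2\alpha})t$. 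Then I would prove by induction on $n$ that $q_n(t)\ge 1-(1-\tfrac{1}{\alpha n})t$, i.e.\ $P(\Pi_{\UT}\le t)\le(1-\tfrac{1}{\alpha n})t$, by plugging the inductive hypothesis into the recursion and evaluating the resulting Beta-integrals: the factor $1/(\alpha n)$ should emerge from $\EE[X_1]=\tfrac1n$ (independent of $\alpha$!) combined with the Beta moment $\EE[(1-Y_1)^{-1}]$-type terms that produce the $1/\alpha$ enhancement, exactly matching the $n=3,\alpha=2$ sample computation given in the text when specialized.

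The main obstacle I anticipate is making the recursion rigorous: the rescaled threshold $\tilde t=tR_2^X/R_2^Y$ can exceed $1$, in which case $q_{n-1}(\tilde t)=0$ and the indicator effectively gets further truncated; handling this truncation while keeping the bound linear in $t$ requires care, because a naive bound $q_{n-1}(\tilde t)\ge 1-(1-\tfrac{1}{\alpha(n-1)})\tilde t$ can go negative. The fix is to use $q_{n-1}(\tilde t)\ge\bigl(1-(1-\tfrac{1}{\alpha(n-1)})\tilde t\bigr)_+$ and to split the expectation according to whether $R_2^X/R_2^Y\le 1/t$ or not; the positive-part and the region $\{Y_1>tX_1\}$ interact in a way that, after the Beta integration, still closes the induction — but verifying the inequality survives this truncation is the delicate computational heart of the argument. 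An alternative, possibly cleaner, route avoiding truncation issues is to bound $\Pi_{\UT}(X,Y)\le \min\{Y_1/X_1,\;1\}$ is too weak as noted, so instead one bounds $P(\Pi_{\UT}(X,Y)\le t)\le\sum_k P(\phi_k^{\UT}(Y)\le t\phi_k^{\UT}(X))$ by a union bound and controls each term — but this union bound will overcount badly and is unlikely to give the sharp constant $1-\tfrac{1}{\alpha n}$, so I would only fall back on it if the recursion proves intractable.
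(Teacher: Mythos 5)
Your proposal does not close the proof, and it misses the mechanism the paper actually uses. You correctly identify the stick-breaking representation and Proposition~\ref{prop:VidalUT} as the right tools, and you correctly reject the single-coordinate and union-bound routes, but the recursion you then build is both incomplete and stated incorrectly. The identity
\[
q_n(t)=\EE\bigl[\mathbbm{1}\{Y_1>tX_1\}\,q_{n-1}(\tilde t)\bigr],\qquad \tilde t:=t\frac{1-X_1}{1-Y_1},
\]
is false. After conditioning on $(X_1,Y_1)$ and renormalizing $(X_2,\dots,X_n)/(1-X_1)$, the $k$-th constraint $T_k>tS_k$ becomes
\[
Y_2'+\cdots+Y_k' \;>\; \tilde t\,(X_2'+\cdots+X_k')+\frac{tX_1-Y_1}{1-Y_1},
\]
and the additive shift $(tX_1-Y_1)/(1-Y_1)$ does not vanish; on $\{Y_1>tX_1\}$ it is merely negative, so dropping it \emph{strengthens} the constraint and yields only the one-sided bound $q_n(t)\geq\EE[\mathbbm{1}\{Y_1>tX_1\}\,q_{n-1}(\tilde t)]$. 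That direction is at least the useful one, but you then leave the inductive step (together with the truncation at $\tilde t>1$) unverified and call it the ``delicate computational heart.'' As written the argument is not a proof, and it is not clear the induction closes with the sharp constant $1-\tfrac{1}{\alpha n}$.

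The paper's proof avoids the recursion entirely; the idea you are missing is a \emph{constraint-relaxation} comparison rather than a peel-off-and-recurse scheme. Assume $\alpha$ is a positive integer (the proof in the paper implicitly requires this, as Proposition~\ref{prop:stick} is formulated for integer parameters). Under stick-breaking, $(S_1,\dots,S_{n-1})\stackrel{d}{=}\bigl(U_{(n\alpha-1,\alpha)},\dots,U_{(n\alpha-1,(n-1)\alpha)}\bigr)$ and likewise $(T_1,\dots,T_{n-1})\stackrel{d}{=}\bigl(V_{(n\alpha-1,\alpha)},\dots,V_{(n\alpha-1,(n-1)\alpha)}\bigr)$, with $U_1,\dots,U_{n\alpha-1},V_1,\dots,V_{n\alpha-1}$ independent uniforms. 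Hence $\{\Pi_{\UT}(X,Y)>t\}$ is the event that $V_{(n\alpha-1,j)}>tU_{(n\alpha-1,j)}$ holds \emph{for $j$ a multiple of $\alpha$ only}. Imposing the inequality at \emph{all} $j=1,\dots,n\alpha-1$ is a sub-event, and by Theorem~\ref{thm:cumulativeDist} applied to a pair of uniform points on $\Delta_{n\alpha-1}$ that sub-event has probability exactly $1-\bigl(1-\tfrac{1}{n\alpha}\bigr)t$. Therefore $P(\Pi_{\UT}(X,Y)>t)\geq 1-\bigl(1-\tfrac{1}{n\alpha}\bigr)t$, which is the claim---no Beta integrals, no induction, no truncation.
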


\section{Proofs}
In this section we will use some standard notation and terminology. For two random variables $X$ and $Y$, we write $X\stackrel{d}{=}Y$ to denote \textbf{identity in distribution}. We say that  $X_1,\ldots,X_n$ are \textbf{independent and identically distributed (i.i.d.)} random variables, if they are jointly independent and $X_i\stackrel{d}{=}X_j$ for all $i,j$. We say that $X_1,\ldots,X_n$ are \textbf{exchangeable}, if their joint law is invariant under permutations, $(X_1,\ldots,X_n)\stackrel{d}{=}(X_{\sigma(1)},\ldots,X_{\sigma(n)})$, for all $\sigma\in S_n$. Finally, $Z$ is an \textbf{exponential} random variable \textbf{with rate $1$}, if it has density $f_Z(z)=e^{-z}$ for $z\geq0$, and zero otherwise.
\subsection{Proof of Theorem~\ref{thm:VanishingProb}}
 Pick $X,Y$ independently according to $\mu$. Since UT-majorization is equivalent to unordered majorization, we want to compute
\be
\label{eq:majUTwalk}
P\left(X\leq^{\UT} Y\right)=P\left(\sum_{i=1}^k(Y_i-X_i)\geq0, \text{for all $k=1,\ldots n$}\right).
\ee
Let $Z:=Y-X$ and define
\be
S_0=0, \quad S_{k}=S_{k-1}+Z_{k}, \quad \text{for $k=1,\ldots, n$}.
\ee
Note that  $(S_k)_{0\leq k\leq n}$ is a random bridge starting and ending at $0$, i.e., $S_0=S_n=0$. 
Hence, the probability~\eqref{eq:majUTwalk} can be seen as the \emph{persistence probability} above level $0$ of the random bridge $(S_k)_k$:
\be
P\left(X\leq^{\UT} Y\right)=P\left(S_k\geq0, \text{for all $k=1,\ldots,n$}\right).
\ee
The steps $Z_1,\ldots, Z_n$ are exchangeable and we have $P\left(S_k=0\right)=0$ for all $k=1,\ldots n-1$, since $\mu$ is absolutely continuous. Hence, by Sparre-Andersen theorem~\cite[Theorem 2]{Sparre-Andersen53}, denoting by $N_n$ the number of sums $S_1,\ldots, S_n$ which are $>0$, we have
 $$
 P\left(N_n=m\right)=\frac{1}{n},
 $$
 for all $m=1,\ldots,n$. In particular $ P\left(N_n=n\right)=P\left(S_k\geq0, \text{for all $k=1,\ldots,n$}\right)=1/n$.

\subsection{Preparatory propositions}

The starting point
is that if $X$ is uniformly distributed on the simplex, then its components and the decreasing rearrangement
of its components have a closed form description in terms of independent exponential random variables (see, for instance,~\cite{Devroye1986} or~\cite[Theorems 3.1 and 3.2]{Jain24}). These classical results  provide  convenient representations of the monotones  $\phi_k(X)$ and $\phi_k^{\UT}(X)$.
\begin{prop}
\label{prop:repr_phik}
	Let $X=(X_1,\dots,X_n)$ be a uniform point on the simplex $\Delta_{n-1}$, and $Z_1,\ldots,Z_n$ be i.i.d. exponential random variables with rate $1$.  
	\begin{enumerate}
	\item 
	Denote $\phi_k(X):=\sum_{i=1}^kX^{\downarrow}_i$ for $k=1,\ldots,n$. Then,
	\be
	\label{eq:V_1}
	\left(\phi_1(X),\phi_2(X),\ldots,\phi_n(X)\right)\stackrel{d}{=}\frac{1}{V_n}\left(V_1,V_2,\ldots,V_n\right),
	\ee
	where $V_k=a_{k,1}Z_1+a_{k,2}Z_2+\cdots+a_{k,n}Z_n$, with
	\be
	\label{eq:V_2}
	a_{k,i}=
	\begin{cases}
	1 &\text{for $i\leq k$}\\
	\frac{k}{i}&\text{for $i> k$}
	\end{cases}.
	\ee
	\item  Denote $\phi_k^{\UT}(X):=\sum_{i=1}^kX_i$ for $k=1,\ldots,n$. Then,
	\be
	\left(\phi_1^{\UT}(X),\phi_2^{\UT}(X),\ldots,\phi_n^{\UT}(X)\right)\stackrel{d}{=}\frac{1}{W_n}\left(W_1,W_2,\ldots,W_n\right),
	\ee
	where $W_k=b_{k,1}Z_1+b_{k,2}Z_2+\cdots+b_{k,n}Z_n$, with
	\be
	b_{k,i}=
	\begin{cases}
	1 &\text{for $i\leq k$}\\
	0&\text{for $i> k$}
	\end{cases}.
	\ee
	\end{enumerate}
\end{prop}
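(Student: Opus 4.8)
The plan is to combine two classical facts about uniform points on the simplex, both recorded in \cite[Theorems 3.1 and 3.2]{Jain24}. First, if $X$ is uniform on $\Delta_{n-1}$ then $X\stackrel{d}{=}(Z_1,\dots,Z_n)/S_n$ with $Z_1,\dots,Z_n$ i.i.d.\ rate-$1$ exponentials and $S_n=Z_1+\dots+Z_n$. Second, the R\'enyi representation of exponential order statistics: the increasing order statistics $Z_{(1)}\le\dots\le Z_{(n)}$ of i.i.d.\ rate-$1$ exponentials satisfy $Z_{(m)}\stackrel{d}{=}\sum_{j=1}^m\frac{1}{n-j+1}E_j$ \emph{jointly in $m$}, for fresh i.i.d.\ rate-$1$ exponentials $E_1,\dots,E_n$, and under this coupling $\sum_{m=1}^n Z_{(m)}=\sum_{j=1}^n E_j$.

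Part (2) then follows at once: $\phi_k^{\UT}(X)=\sum_{i=1}^k X_i\stackrel{d}{=}(Z_1+\dots+Z_k)/S_n$ jointly in $k$, so setting $W_k:=Z_1+\dots+Z_k$ one has $W_n=S_n$ and $(\phi_1^{\UT}(X),\dots,\phi_n^{\UT}(X))\stackrel{d}{=}\frac{1}{W_n}(W_1,\dots,W_n)$, where $W_k=\sum_{i=1}^n b_{k,i}Z_i$ with $b_{k,i}=1$ for $i\le k$ and $b_{k,i}=0$ for $i>k$.

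For part (1), note that sorting a vector commutes with dividing it by a positive scalar, so $X\stackrel{d}{=}Z/S_n$ gives $X^\downarrow\stackrel{d}{=}Z^\downarrow/S_n$ jointly, whence $\phi_k(X)=\sum_{i=1}^k X_i^\downarrow\stackrel{d}{=}\frac{1}{S_n}\sum_{i=1}^k Z_i^\downarrow$ jointly in $k$. Writing $Z_i^\downarrow=Z_{(n-i+1)}$ and inserting the R\'enyi representation, $\sum_{i=1}^k Z_i^\downarrow=\sum_{i=1}^k\sum_{j=1}^{n-i+1}\frac{E_j}{n-j+1}$; swapping the order of summation, the coefficient of $E_j$ equals $\frac{1}{n-j+1}\#\{1\le i\le k:\ i\le n-j+1\}=\frac{\min(k,\,n-j+1)}{n-j+1}$. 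Re-indexing by $l=n-j+1$ and relabelling the i.i.d.\ family $(E_{n-l+1})_{l=1}^n$ as $(Z_l)_{l=1}^n$ turns this into $\sum_{l=1}^n a_{k,l}Z_l$ with $a_{k,l}=1$ for $l\le k$ (in particular $l=k$ gives $k/l=1$, consistently) and $a_{k,l}=k/l$ for $l>k$, which is exactly \eqref{eq:V_2}; meanwhile $S_n=\sum_m Z_{(m)}=\sum_j E_j=V_n$ under the same coupling. This gives $(\phi_1(X),\dots,\phi_n(X))\stackrel{d}{=}\frac{1}{V_n}(V_1,\dots,V_n)$, as claimed.

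The point requiring care is that the representations must hold \emph{jointly} across $k=1,\dots,n$, not merely marginally for each $k$ — this is what makes them usable in the formulas $\Pi(x,y)=\min_k\phi_k(y)/\phi_k(x)$ and $\Pi_{\UT}(x,y)=\min_k\phi_k^{\UT}(y)/\phi_k^{\UT}(x)$ — but this is automatic here, since the R\'enyi representation is itself a joint statement and the relabelling $Z_l=E_{n-l+1}$ does not depend on $k$. The only genuine computation is the double-sum interchange producing the $\min(k,n-j+1)$, and hence the piecewise form of the coefficients $a_{k,i}$; everything else is bookkeeping.
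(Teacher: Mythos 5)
Your proof is correct and follows the same approach the paper intends: the paper states Proposition~\ref{prop:repr_phik} without a detailed argument, delegating it to the classical exponential representation of the uniform simplex point and the R\'enyi representation of exponential order statistics (citing Jain--Kwan--Michelen, Thms.~3.1--3.2). You have simply made that derivation explicit — the sorting-commutes-with-scaling step, the double-sum interchange producing $\min(k,n-j+1)/(n-j+1)$, the $k$-independent relabelling $Z_l = E_{n-l+1}$, and the bookkeeping that $S_n = V_n = W_n$ under the coupling — and you correctly flag and handle the joint-in-$k$ requirement, which is the only place a careless argument could go astray. Nothing further is needed.
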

A more convenient representation for the monotones $\phi_k^{\UT}(X)$ is provided by a connection  between partial sums of a uniform point on the simplex and the order statistics of independent uniform random variables. If $U_1,\dots,U_{n-1}$ are ${n-1}$ independent random variables, we denote the corresponding order statistics with $U_{({n-1},1)},\ldots U_{({n-1},{n-1})}$, where $U_{({n-1},i)}$ is the $i-$th smallest among the $U_j$'s, that is the ${n-1}$ random variables are rearranged in nondecreasing order: $U_{({n-1},1)}\leq \cdots\leq U_{({n-1},{n-1})}$. 

Then, using classical representation theorems for order statistics (see, e.g.,~\cite{Hajos54,ArnoldOrder}) we have the following.
\begin{prop}
\label{prop:alternative}
	Let $X=(X_1,\dots,X_n)$ be a uniform point on the simplex $\Delta_{n-1}$, and $U_1,\ldots,U_{n-1}$ be i.i.d. uniform random variables on the interval $[0,1]$.  Denote $\phi_k^{\UT}(X):=\sum_{i=1}^kX_i$ for $k=1,\ldots,n$. Then,
		\be
		\label{eq:order_repr}
	\left(\phi_1^{\UT}(X),\phi_2^{\UT}(X),\ldots,\phi_n^{\UT}(X)\right)\stackrel{d}{=}\left(U_{({n-1},1)},\dots,U_{({n-1},{n-1})},1\right).
	\ee
 Since $X_j=\phi_j^{\UT}(X)-\phi_{j-1}^{\UT}(X)$ (with $\phi_{0}^{\UT}(X):=0$), this is also equivalent to say that $X$ has the same distribution of the sequence of spacings between order statistics of ${n-1}$  uniform random variables.

\end{prop}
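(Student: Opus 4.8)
The plan is to reduce the statement to the classical fact that the $n$ spacings determined by $n-1$ i.i.d.\ uniform points of $[0,1]$ are uniformly distributed on $\Delta_{n-1}$, and then to observe that forming partial sums inverts the operation of forming spacings, so the partial sums of a uniform point on the simplex recover the order statistics (with the endpoint $1$ appended, since the coordinates sum to $1$).

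Concretely, I would first fix coordinates: parametrise $\Delta_{n-1}$ by $(x_1,\dots,x_{n-1})$ with $x_n:=1-\sum_{i=1}^{n-1}x_i$, so that the uniform measure $\mu_n$ has constant density $(n-1)!$ on the set $\{x_i\ge 0,\ \sum_{i=1}^{n-1}x_i\le 1\}$. The map $(x_1,\dots,x_{n-1})\mapsto(s_1,\dots,s_{n-1})$ with $s_k=\sum_{i=1}^k x_i$ is linear, its matrix being lower triangular with $1$'s on the diagonal, hence it has unit Jacobian; moreover it is a bijection of the above simplex onto the order simplex $\{0\le s_1\le\dots\le s_{n-1}\le 1\}$. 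Pushing $\mu_n$ forward, the vector $(\phi_1^{\UT}(X),\dots,\phi_{n-1}^{\UT}(X))$ therefore has constant density $(n-1)!$ on $\{0\le s_1\le\dots\le s_{n-1}\le 1\}$. This is precisely the well-known joint density of the order statistics $(U_{(n-1,1)},\dots,U_{(n-1,n-1)})$ of $n-1$ i.i.d.\ uniforms (see, e.g.,~\cite{ArnoldOrder,Hajos54}), so the two random vectors have the same law; since $\phi_n^{\UT}(X)=\sum_i X_i=1$ almost surely, \eqref{eq:order_repr} follows.

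An equivalent, and perhaps more transparent, route is to invoke the representation in the other direction: if one takes $X$ to be the vector of consecutive gaps $(U_{(n-1,1)},\,U_{(n-1,2)}-U_{(n-1,1)},\dots,\,1-U_{(n-1,n-1)})$ --- which is uniform on $\Delta_{n-1}$ by the cited theorems --- then by definition $\phi_k^{\UT}(X)=U_{(n-1,k)}$ for $k\le n-1$ and $\phi_n^{\UT}(X)=1$. The closing remark about spacings is then immediate, since $X_j=\phi_j^{\UT}(X)-\phi_{j-1}^{\UT}(X)$ is exactly the $j$-th spacing.

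There is no genuine obstacle here: the only points needing (routine) verification are that the change of variables has unit Jacobian and carries the standard simplex onto the order simplex, and that the normalising constants agree --- the latter being automatic, as both densities are supported on regions of Lebesgue volume $1/(n-1)!$.
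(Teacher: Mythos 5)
Your proof is correct and fills in precisely the routine verification that the paper delegates to the cited classical references (\cite{ArnoldOrder,Hajos54}); the paper itself gives no proof, treating the statement as a restatement of the well-known correspondence between uniform-simplex spacings and uniform order statistics. Your first route (lower-triangular change of variables with unit Jacobian carrying the projected simplex onto the order simplex, then matching the constant density $(n-1)!$ against the standard joint density of order statistics) is exactly the argument one would supply, and your second route is the same fact read backwards; both are sound.
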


\begin{rem}
	If $U_1,\dots,U_{n-1}$ are i.i.d. uniform random variables on $[0,1]$,
	$(U_{(n-1,1)},\dots,U_{(n-1,n-1)})$ is a uniform point over the simplex $\mathcal{E}_{n-1}:=\{u\in\mathbb{R}^{n-1}: 0\leq u_1\leq u_2 \leq \cdots \leq u_{n-1}\leq 1\}$.
\end{rem}

\subsection{Proof of Theorem~\ref{thm:limit} }
The technical ingredient to prove Theorem~\ref{thm:limit} is the following bound.
\begin{prop}[Bernstein's inequality {(see, for example, ~\cite[Theorem 2.8.2]{vershynin2018high})}] Let $Z_1,\ldots,Z_n$ be i.i.d. exponential random variables with rate $1$, and let $V= a_1Z_1+\cdots +a_nZ_n$ for some coefficients $a_1,\ldots,a_n\in\R$. There is an absolute constant $c>0$ such that
\be
P\left(|V-E V|\geq t\right)\leq \exp\left(-c\min\left(\frac{t^2}{a_1^2+\cdots+a_n^2},\frac{t}{\max|a_i|}\right)\right),
\ee
for every $t\geq0$.
\end{prop}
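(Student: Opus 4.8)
The statement is the classical Bernstein inequality specialised to a weighted sum of independent centred exponentials, and the plan is the standard Cram\'er--Chernoff (exponential moment) argument. The only structural point worth flagging in advance is that an exponential variable has a finite moment generating function only on the interval $(-\infty,1)$; this bounded domain of analyticity is precisely what forces the two-regime ($t^2$ versus $t$) shape of the tail bound.

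\emph{Moment generating functions.} Since $E[e^{\lambda Z_i}]=(1-\lambda)^{-1}$ for $\lambda<1$ and $E Z_i=1$, one has $\log E[e^{\lambda(Z_i-1)}]=-\lambda-\log(1-\lambda)=\sum_{k\ge 2}\lambda^k/k$. This quantity is nonnegative (its $\lambda$-derivative $\lambda/(1-\lambda)$ vanishes only at $0$), and for $|\lambda|\le \tfrac12$ it is bounded by $\sum_{k\ge 2}|\lambda|^k=|\lambda|^2/(1-|\lambda|)\le 2\lambda^2$, so $E[e^{\lambda(Z_i-1)}]\le e^{2\lambda^2}$ whenever $|\lambda|\le\tfrac12$. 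Writing $\sigma^2:=a_1^2+\cdots+a_n^2$ and $M:=\max_i|a_i|$, independence gives, for every $\lambda$ with $|\lambda|\le\tfrac1{2M}$ (so that $|\lambda a_i|\le\tfrac12$ for all $i$),
\[
E\bigl[e^{\lambda(V-EV)}\bigr]=\prod_{i=1}^{n}E\bigl[e^{\lambda a_i(Z_i-1)}\bigr]\le \exp\Bigl(2\lambda^2\textstyle\sum_i a_i^2\Bigr)=e^{2\lambda^2\sigma^2}.
\]

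\emph{Chernoff bound with a constrained optimum.} For $t\ge 0$ and $0<\lambda\le\tfrac1{2M}$, Markov's inequality yields $P(V-EV\ge t)\le e^{-\lambda t+2\lambda^2\sigma^2}$. The unconstrained minimiser of $\lambda\mapsto-\lambda t+2\lambda^2\sigma^2$ is $\lambda_\star=t/(4\sigma^2)$. If $\lambda_\star\le\tfrac1{2M}$, i.e. $t\le 2\sigma^2/M$, then taking $\lambda=\lambda_\star$ gives $P(V-EV\ge t)\le e^{-t^2/(8\sigma^2)}$; otherwise $t>2\sigma^2/M$, and taking $\lambda=\tfrac1{2M}$ together with $\sigma^2/M<t/2$ gives $-\lambda t+2\lambda^2\sigma^2=-\tfrac{t}{2M}+\tfrac{\sigma^2}{2M^2}<-\tfrac{t}{4M}$, hence $P(V-EV\ge t)\le e^{-t/(4M)}$. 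Since $\min(t^2/\sigma^2,\,t/M)$ is at most each of $t^2/\sigma^2$ and $t/M$, both cases are subsumed in $P(V-EV\ge t)\le \exp\bigl(-\tfrac18\min(t^2/\sigma^2,\,t/M)\bigr)$.

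\emph{Two-sided bound and conclusion.} Replacing $(a_1,\dots,a_n)$ by $(-a_1,\dots,-a_n)$ leaves $\sigma^2$ and $M$ unchanged, so the same estimate bounds $P(EV-V\ge t)$; a union bound gives $P(|V-EV|\ge t)\le 2\exp(-\tfrac18\min(t^2/\sigma^2,\,t/M))$, which is exactly the form of~\cite[Theorem~2.8.2]{vershynin2018high}, and the harmless prefactor $2$ is absorbed into the absolute constant (on the range where the bound is not vacuous it degrades $c$ only by a bounded factor), giving the statement with, say, $c=\tfrac1{16}$. There is no genuine obstacle here --- the argument is textbook; the one place requiring attention is the constrained optimisation over $\lambda$, dictated by the finite radius of convergence of the exponential moment generating function, which is the mechanism producing the two regimes and the reason the bound does not collapse to a single sub-Gaussian tail.
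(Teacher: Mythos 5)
The paper offers no proof of this proposition at all---it is imported verbatim from~\cite[Theorem 2.8.2]{vershynin2018high}---so the relevant comparison is with the standard proof behind that citation, and that is exactly what you reproduce. The core of your argument is correct: $\log E[e^{\lambda(Z_i-1)}]=\sum_{k\geq 2}\lambda^k/k\leq 2\lambda^2$ for $|\lambda|\leq\tfrac12$, the product bound $E[e^{\lambda(V-EV)}]\leq e^{2\lambda^2\sigma^2}$ for $|\lambda|\leq\tfrac1{2M}$, the constrained optimisation producing the two regimes, and the reflection $a_i\mapsto -a_i$ for the lower tail are all fine, and they yield the one-sided bound $P(V-EV\geq t)\leq \exp(-\tfrac18\min(t^2/\sigma^2,t/M))$ with no prefactor.

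The one step that does not hold up is the last one: absorbing the union-bound factor $2$ into the constant. From $P(|V-EV|\geq t)\leq 2e^{-u/8}$ with $u=\min(t^2/\sigma^2,t/M)$ you cannot deduce $P(|V-EV|\geq t)\leq e^{-cu}$ for \emph{all} $t\geq0$ by merely shrinking $c$: for $u\leq 8\ln 2$ the bound $2e^{-u/8}$ is vacuous (it exceeds $1$) while the claimed bound is strictly below $1$, and even on the non-vacuous range the requirement $2e^{-u/8}\leq e^{-cu}$ forces $c\leq \tfrac18-\ln 2/u$, which tends to $0$ as $u\downarrow 8\ln 2$; so no absolute constant (certainly not $c=\tfrac1{16}$) works by this route, and your parenthetical justification (``on the range where the bound is not vacuous it degrades $c$ only by a bounded factor'') is incorrect. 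Two clean fixes: (i) keep the prefactor $2$, which is exactly the form stated in the cited theorem and is all the paper needs---in Corollary~\ref{cor:applied_Bernstein} and the proof of Theorem~\ref{thm:limit} explicit prefactors $2$ and $4$ are carried along anyway, so an absolute constant in front is immaterial; or (ii) if you want the prefactor-free two-sided statement literally as written, add a separate argument for $u=O(1)$, namely a lower bound of the type $P(|V-EV|<t)\geq c\,\min(t^2/\sigma^2,t/M)$ in that range (true, e.g., because $V$ is a sum of independent log-concave variables, hence log-concave, so its density near the mean is bounded below by a constant over $\sigma$), but this is a genuinely extra ingredient rather than constant-chasing.
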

\par
\begin{cor} 
\label{cor:applied_Bernstein} Let $V_k$ defined as in ~\eqref{eq:V_1}-\eqref{eq:V_2}. Then,
for all $k\leq n$,
\be
P\left(|V_k-E V_k|\geq \epsilon E V_k\right)\leq \exp\left(-c\min\left(\epsilon^2{A_k},\epsilon  B_k\right)\right),
\ee
where 
\be
A_k:=\frac{\left(\sum_{i=1}^n a_{k,i}\right)^2}{\sum_{i=1}^n a_{k,i}^2}, \quad 
B_{k}:= \frac{\sum_{i=1}^n a_{k,i}}{\max_{1\leq i\leq n} a_{k,i}}.
\ee
\end{cor}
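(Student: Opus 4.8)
The plan is to obtain the bound as an immediate specialization of the Bernstein inequality stated just above, taking the deviation level $t$ proportional to the mean of $V_k$.

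First I would evaluate $E V_k$. Since $Z_1,\dots,Z_n$ are i.i.d.\ exponential of rate $1$, each has mean $1$, so by linearity $E V_k=\sum_{i=1}^n a_{k,i}$, and this is strictly positive because every coefficient $a_{k,i}$ is positive ($a_{k,i}=1$ for $i\leq k$ and $a_{k,i}=k/i$ for $i>k$). The positivity of the coefficients also shows that $\max_{1\leq i\leq n}|a_{k,i}|=\max_{1\leq i\leq n}a_{k,i}$, so the absolute value appearing in the Bernstein bound may simply be dropped here (in fact $\max_{i}a_{k,i}=1$, so $B_k=E V_k$, although this will not be needed).

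Next I would apply the Bernstein inequality with coefficients $a_i:=a_{k,i}$ and with $t:=\epsilon\,E V_k=\epsilon\sum_{i=1}^n a_{k,i}\geq 0$. This yields
\[
P\left(|V_k-E V_k|\geq \epsilon\,E V_k\right)\leq \exp\left(-c\min\left(\frac{\big(\epsilon\sum_{i=1}^n a_{k,i}\big)^2}{\sum_{i=1}^n a_{k,i}^2},\ \frac{\epsilon\sum_{i=1}^n a_{k,i}}{\max_{1\leq i\leq n}a_{k,i}}\right)\right).
\]
It then remains only to simplify the two arguments of the minimum: factoring $\epsilon^2$ out of the first turns it into $\epsilon^2 A_k$ with $A_k=(\sum_i a_{k,i})^2/\sum_i a_{k,i}^2$, while the second is literally $\epsilon B_k$ with $B_k=\sum_i a_{k,i}/\max_i a_{k,i}$. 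This is exactly the claimed inequality.

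I do not expect any genuine obstacle: the corollary is purely a substitution into Bernstein's inequality, and the only care required is the elementary computation $E V_k=\sum_i a_{k,i}$ together with the remark that the $a_{k,i}$ are nonnegative. The substantive work — estimating how the quantities $A_k$ and $B_k$ grow with $k$ and $n$, which is what ultimately drives Theorem~\ref{thm:limit} — is a separate matter to be handled afterward.
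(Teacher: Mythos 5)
Your proof is correct and is exactly the intended derivation: the paper states this as an immediate corollary of the preceding Bernstein inequality, and your substitution $t=\epsilon\,E V_k=\epsilon\sum_i a_{k,i}$ (valid since the $a_{k,i}$ are nonnegative) is precisely how it follows. Nothing further is needed.
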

We denote by $H_n=\sum_{i=1}^n\frac{1}{i}$ the $n$-th harmonic number, and $H_{n,m}=\sum_{i=1}^n\frac{1}{i^m}$ the $n$-th harmonic number of order $m$.
\begin{lem}
For all $k=1,\ldots, n$,
\begin{align}
\sum_{i=1}^n a_{k,i}=k\left(1+H_n-H_k\right),\quad 
\sum_{i=1}^n a_{k,i}^2=k\left(1+k(H_{n,2}-H_{k,2})\right),\quad 
\max_{1\leq i\leq n} a_{k,i}=1.
\end{align}
\end{lem}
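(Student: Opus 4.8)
The identities are purely computational: each follows by splitting the index range at $i=k$ and recognizing the resulting tails as differences of (generalized) harmonic numbers. There is no genuine obstacle here; the only thing to be careful about is the edge cases $k=n$ (where the second sum is empty and the harmonic differences vanish) and $k=1$ (which must be consistent with $a_{1,i}=1/i$ for $i>1$).

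For the first identity I would write, using $a_{k,i}=1$ for $i\le k$ and $a_{k,i}=k/i$ for $i>k$,
\be
\sum_{i=1}^n a_{k,i}=\sum_{i=1}^k 1+\sum_{i=k+1}^n\frac{k}{i}=k+k\sum_{i=k+1}^n\frac1i=k+k\left(H_n-H_k\right)=k\left(1+H_n-H_k\right),
\ee
where in the last step I used $\sum_{i=k+1}^n\frac1i=H_n-H_k$ (an empty sum when $k=n$, consistent with $H_n-H_n=0$). Identically, for the second identity,
\be
\sum_{i=1}^n a_{k,i}^2=\sum_{i=1}^k 1+\sum_{i=k+1}^n\frac{k^2}{i^2}=k+k^2\sum_{i=k+1}^n\frac1{i^2}=k+k^2\left(H_{n,2}-H_{k,2}\right)=k\left(1+k(H_{n,2}-H_{k,2})\right).
\ee

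For the third identity, observe that $a_{k,i}=1$ whenever $1\le i\le k$, while for $i>k$ we have $a_{k,i}=k/i<1$ since $i>k\ge 1$. Hence $\max_{1\le i\le n}a_{k,i}=1$, the maximum being attained (for every admissible $k$, since $k\ge1$) at any index $i\le k$. This completes the proof; I expect the write-up to be a few lines with no step requiring more than the definition of $a_{k,i}$ and of $H_n,H_{n,2}$.
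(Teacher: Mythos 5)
Your proof is correct, and since the paper states this lemma without proof (treating it as an immediate computation), your write-up is exactly the verification a reader would supply. The split at $i=k$, the identification of the tails with $H_n-H_k$ and $H_{n,2}-H_{k,2}$, and the observation that $a_{k,i}=k/i<1$ for $i>k$ are all right, and your remarks about the edge cases $k=n$ and $k=1$ are accurate.
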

\begin{lem} \label{lem:bounds} 
We have
\begin{align}
\label{eq:bound_Ak}
A_k&\geq  \frac{1}{2} \left(\log\frac{n}{2}\right)^2+\alpha_n(k-1),\\
\label{eq:bound_Bk}
B_k &\geq \log \frac{n}{2}+\beta_n(k-1)
 \end{align}
 where
\begin{align}
\alpha_{n}= \frac{n}{n-1}\left(1+ \frac{ \left(1+\log \left(\frac{n+1}{2}\right)\right)^2}{2 n-1}\right),\qquad 
\beta_{n}=1-\frac{\log \left(\frac{n+1}{2}\right)}{n-1}.
\end{align}
\begin{proof}
We start from the following bounds,
\begin{align}
\label{eq:H1}
H_n-H_k= \sum_{i=k+1}^n\frac{1}{i}&\geq \int_{k+1}^{n+1}\frac{dx}{x}= \log \frac{n+1}{k+1}, \\
\label{eq:H2}
H_{n,2}-H_{k,2}=  \sum_{i=k+1}^n\frac{1}{i^2}&\leq  \int_{k}^{n}\frac{dx}{x^2} =\frac{n-k}{nk}.
\end{align}
Therefore,
\begin{align}
A_k=\frac{k\left(1+H_n-H_k\right)^2}{\left(1+k(H_{n,2}-H_{k,2})\right)}\geq  \frac{nk}{ (2 n-k)}\left(1+\log  \frac{n+1}{k+1}\right)^2.
\end{align}
The function $x\mapsto g(x):=\frac{nx}{ (2 n-x)}\left(1+\log  \frac{n+1}{x+1}\right)^2$ is concave for $x\in[1,n]$. Hence, it can be bounded from below by the linear interpolation $g(1)+\frac{g(n)-g(1)}{n-1}(x-1)$. This gives
\begin{align}
A_k\geq \frac{n \left(1+\log \left(\frac{n+1}{2}\right)\right)^2}{2 n-1}+ \frac{ \left(n+ \frac{n \left(1+\log \left(\frac{n+1}{2}\right)\right)^2}{2 n-1}\right)}{n-1}(k-1),
\end{align}
and hence~\eqref{eq:bound_Ak}. The proof of ~\eqref{eq:bound_Bk} for $B_k$ is similar. From~\eqref{eq:H1},
\begin{align}
B_k=k\left(1+H_n-H_k\right)\geq k \left(1 + \log \frac{n+1}{k+1}\right).
\end{align}
The map $x\mapsto h(x):=x \left(1 + \log \frac{n+1}{x+1}\right)$ is concave for $x\in[1,n]$. We can therefore write
\begin{align}
B_k\geq \log \left(\frac{n+1}{2}\right)+1+\frac{\left(n-\log \left(\frac{n+1}{2}\right)-1\right)}{n-1}(k-1),
\end{align}
and hence~\eqref{eq:bound_Bk}. 
\end{proof}
\end{lem}
\begin{proof}[Proof of Theorem~\ref{thm:limit}]
We want to show that
$
P\left(\Pi(X,Y)<1-\epsilon\right)\to 0,
$ {as $n\to\infty$}.
Let $\epsilon\in(0,1)$.
We have
\begin{multline*}
P\left(\Pi(X,Y)<1-\epsilon\right)=P\left(\min_{1\leq k\leq n}\frac{\phi_k(Y)}{\phi_k(X)}<1-\epsilon\right)\\
=P\left(\frac{\phi_k(Y)}{\phi_k(X)}<1-\epsilon,\,\,\text{for some $k\leq n$}\right)
\leq \sum_{k=1}^nP\left(\frac{\phi_k(Y)}{\phi_k(X)}<1-\epsilon\right),
\end{multline*}
by the union bound.
Using the representation of Proposition~\ref{prop:repr_phik}, we want to bound
\be
\label{eq:want_bound}
P\left(\frac{\phi_k(Y)}{\phi_k(X)}<1-\epsilon\right)=P\left(\frac{V_k(Y)}{V_n(Y)}<(1-\epsilon)\frac{V_k(X)}{V_n(X)}\right)\leq\delta_{n,k},
\ee
so that, 
\be
\label{eq:want_sum}
\sum_{k=1}^n\delta_{n,k}\to0,\quad\text{as $n\to\infty$}.
\ee
By Corollary~\ref{cor:applied_Bernstein}, 
$
\left|V_n-n\right|\leq n\epsilon$ with probability at least $1-e^{-c\epsilon^2 n}
$.
Hence,
\begin{align*}
&P\left(\frac{V_k(Y)}{V_n(Y)}<(1-\epsilon)\frac{V_k(X)}{V_n(X)}\right)\\
&\leq P\left(V_k(Y)<(1-\epsilon)\frac{V_n(Y)}{V_n(X)}V_k(X),\quad (V_n(X),V_n(Y))\in \left((1-\epsilon)n,(1+\epsilon)n\right)^2\right)+2e^{-c\epsilon^2 n}\\
&\leq P\left(V_k(Y)<(1+\epsilon)V_k(X)\right)+2e^{-c\epsilon^2 n}\\
&\leq P\left(V_k(Y)<(1+\epsilon)V_k(X),\quad V_k(X)\in\left((1-\epsilon)EV_k(X),(1+\epsilon)EV_k(X)\right)\right)\\
&+P\left(V_k(Y)<(1+\epsilon)V_k(X),\quad V_k(X)\notin\left((1-\epsilon)EV_k(X),(1+\epsilon)EV_k(X)\right)\right)+2e^{-c\epsilon^2 n}\\
&\leq P\left(V_k(Y)<(1+\epsilon)^2EV_k(X)\right)
+e^{-c\min\left(\epsilon^2 A_k,\epsilon B_k\right)}+2e^{-c\epsilon^2 n}\\ 
&\leq P\left(|V_k(Y)-EV_k(Y)|<3\epsilon EV_k(Y)\right)
+e^{-c\min\left(\epsilon^2 A_k,\epsilon B_k\right)}+2e^{-c\epsilon^2 n}\\
&{\leq 4e^{-c\min\left(\epsilon^2 A_k,\epsilon B_k\right)},}
\end{align*}
where in the last line we used $A_k\leqslant n$. 
\par
Therefore we have the bound~\eqref{eq:want_bound} with 
\be
\delta_{n,k}:=4e^{-c\epsilon^2 A_k}+4e^{-c \epsilon B_k}.
\ee
It remains to show~\eqref{eq:want_sum}. 
For this we use Lemma~\ref{lem:bounds}:
\begin{align*}
\sum_{k=1}^ne^{-c\epsilon^2 A_{k}}&\leq e^{-c(\epsilon\log (n/2))^2/2}\sum_{k=0}^{n-1}e^{-c\epsilon^2  \alpha_n k}
=e^{-c(\epsilon\log (n/2))^2/2}\frac{1-e^{-c\epsilon^2 n\alpha_n}}{1-e^{-c\epsilon^2  \alpha_n}}\to 0,\\
\sum_{k=1}^ne^{-c\epsilon B_{k}}&\leq e^{-c\epsilon \log (n/2)}\sum_{k=0}^{n-1}e^{-c\epsilon  \beta_nk}
= e^{-c\epsilon \log (n/2)}\frac{1-e^{-c\epsilon  n\beta_n}}{1-e^{-c\epsilon  \beta_n}}\to 0,
\end{align*}
as $n\to\infty$.
\end{proof}

\subsection{Proof of Theorem~\ref{thm:cumulativeDist} and Proposition~\ref{prop:Dirichlet}  }
The connection with order statistics given in Proposition~\ref{prop:alternative} is useful for our purposes since a large body of literature has been devoted to the study of order statistics of independent random variables (see, e.g., ~\cite{Denuit,OrderStatistics,ArnoldOrder,ShorakWellner}).  
Of special interest for us are the probabilities
\begin{equation}\label{eq:P(a,b)}
	P\left(a_j\leq U_{(m,j)}\leq b_j \text{ for all } 1\leq j \leq m\right),
\end{equation}
where $0\leq a_1\leq \dots \leq a_m\leq 1$ and $0\leq b_1\leq \dots \leq b_m\leq 1$.
\par

From now on, we specialize \eqref{eq:P(a,b)} to the case $b_j=1$ for all $1\leq j\leq m$, 
and we use the notation
\begin{equation}\label{eq:P(a)}
	P_m(a_1,\dots,a_m):=P(U_{(m,j)}>a_j \text{ for all }1\leq j \leq m).
\end{equation}
The key ingredient to prove Theorem~\ref{thm:cumulativeDist} is a recursive relation for \eqref{eq:P(a)} discovered by L. N. Bolshev shortly before his death and later published in~\cite{Kostelnikova}.
\begin{prop}[Bolshev's recursion, \cite{ShorakWellner} \S 9.3]
	Let $0\leq a_1\leq \dots \leq a_m \leq  1$. Then:
	\begin{equation}\label{eq:Bolshev}
		P_m(a_1,\dots,a_m)=1-\sum_{k=1}^m \binom{m}{k} ~a_k^k P_{{m}-k}(a_{k+1},\dots, a_m),
	\end{equation}
	with $P_0\equiv 1$.
\end{prop}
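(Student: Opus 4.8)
The plan is to prove the recursion by a purely combinatorial decomposition of the complementary event, avoiding both induction and any explicit density computation. Realise the order statistics as $U_{(m,1)}\le\cdots\le U_{(m,m)}$ coming from i.i.d.\ uniform variables $U_1,\dots,U_m$ on $[0,1]$. The event in~\eqref{eq:P(a)} defining $P_m(a_1,\dots,a_m)$ has complement $\bigcup_{j=1}^m\{U_{(m,j)}\le a_j\}$, and the first step is to slice this union by the \emph{largest} index at which the inequality fails: for $k=1,\dots,m$ put
\[
B_k:=\{U_{(m,k)}\le a_k\}\cap\{U_{(m,j)}>a_j\ \text{for every}\ j>k\}.
\]
One checks immediately that the $B_k$ are pairwise disjoint (if $k<k'$ then $B_k$ forces $U_{(m,k')}>a_{k'}$ while $B_{k'}$ forces $U_{(m,k')}\le a_{k'}$) and that every point of the complement lies in exactly one of them, namely the one with $k=\max\{j:U_{(m,j)}\le a_j\}$. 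Hence $1-P_m(a_1,\dots,a_m)=\sum_{k=1}^m P(B_k)$.

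The second step is to evaluate $P(B_k)$, and this is where the hypothesis $a_k\le a_{k+1}$ does the work. On $B_k$ one has $U_{(m,k)}\le a_k<U_{(m,k+1)}$ — using $U_{(m,k+1)}>a_{k+1}\ge a_k$ when $k<m$, and trivially when $k=m$ — so \emph{exactly} $k$ of the $U_i$'s fall in $[0,a_k]$ and these are the $k$ smallest ones. Conditioning on which $k$-subset of indices carries the small points produces a factor $\binom{m}{k}$; for a fixed such subset $S$, the event that the points indexed by $S$ all lie in $[0,a_k]$ has probability $a_k^k$ and is independent of the remaining $m-k$ points, whose order statistics $V_1\le\cdots\le V_{m-k}$ are required to satisfy $V_i>a_{k+i}$ for $1\le i\le m-k$. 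Since $a_{k+1}\le\cdots\le a_m$ is itself an admissible increasing tuple and those $m-k$ points are i.i.d.\ uniform, this last probability is exactly $P_{m-k}(a_{k+1},\dots,a_m)$ (with $P_0\equiv1$ when $k=m$); moreover the constraint $V_1>a_{k+1}\ge a_k$ automatically keeps the large block above the small block, so the ``$m-k$ points above $a_k$'' requirement is not an extra condition and nothing is over- or under-counted. Thus $P(B_k)=\binom{m}{k}\,a_k^k\,P_{m-k}(a_{k+1},\dots,a_m)$, and summing over $k$ yields exactly~\eqref{eq:Bolshev}.

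The one step I would spell out with care — the only place a reader could stumble — is this event-level equivalence, for each fixed $k$ and each fixed $k$-subset $S$, between the ``sliced'' set $B_k$ and the factorised event [the points indexed by $S$ lie in $[0,a_k]$] $\cap$ [the order statistics of the remaining $m-k$ points exceed $a_{k+1},\dots,a_m$ respectively]. Both inclusions rest on $a_k\le a_{k+1}$: in one direction it forces a clean cut, so that $S=\{i:U_i\le a_k\}$ has exactly $k$ elements and these are the $k$ smallest; in the other direction $V_1>a_{k+1}\ge a_k$ places the ``large'' block entirely above the ``small'' one, so the two blocks reassemble into the global order statistics in the expected way and the residual inequalities $V_i>a_{k+i}$ become precisely $U_{(m,k+i)}>a_{k+i}$. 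No measure-theoretic subtlety intervenes: absolute continuity of the uniform law makes every boundary event $\{U_{(m,j)}=a_j\}$ null, so the argument is insensitive to whether the inequalities in~\eqref{eq:P(a)} are taken strict or weak, and the edge case $a_k=1$ (where the corresponding $B_k$ with $k<m$ becomes empty) is handled automatically.
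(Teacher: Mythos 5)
Your proof is correct and follows essentially the same route as the paper's (Appendix~\ref{app:Bolshev}): both partition the complement of the success event by the largest failing index $k$, use $a_k\le a_{k+1}$ to conclude that exactly $k$ of the uniforms lie in $[0,a_k]$, and factor the resulting probability into $\binom{m}{k}a_k^k$ times $P_{m-k}(a_{k+1},\dots,a_m)$. The only cosmetic difference is that you condition directly on the $k$-subset of indices carrying the small points, whereas the paper phrases the same counting via the empirical distribution function $G_m$ and an explicit conditioning/unconditioning step that introduces and then cancels the factor $(1-a_k)^{m-k}$; your version sidesteps that bookkeeping, but the underlying combinatorial decomposition is identical.
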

	A proof of this proposition can be found in \cite{Kostelnikova} and \cite{ShorakWellner} in terms of empirical distribution functions. 
	For the reader's convenience, we include a self-contained proof in Appendix~\ref{app:Bolshev}.
\label{sec:proofs}

\begin{proof}[{Proof of Theorem~\ref{thm:cumulativeDist}}]
Let  $F_n(t):=P(\Pi_{\UT}(X,Y)\leq t)$ be the distribution function of $\Pi_{\UT}(X,Y)$. By the normalization  of $X,Y\in\Delta_{n}$,
\begin{equation}
	0\leq \Pi_{\UT}(X,Y)\leq \frac{Y_1+\dots+Y_n}{X_1+\dots+X_n}=1,
\end{equation} 
almost surely, which implies $F_n(t)=0$ for $t<0$, and $F_n(t)=1$ for $t\geq1$. Let $0\leq t <1$, and denote with $S_k:=X_1+\dots+X_{k}$ and $T_k:=Y_1+\dots+Y_{k}$ for $k=1,\dots,n-1$. Then,
\begin{equation}\label{eq:fn}
	F_n(t)=1-P(\Pi_{\UT}(X,Y)>t)=1-P\left(\frac{T_1}{S_1}>t, \dots, \frac{T_{n-1}}{S_{n-1}}>t\right), 
\end{equation}
where we dropped the inequality $T_n/S_n>t$, since it is redundant for $t<1$.
Using Proposition~\ref{prop:alternative} and the notation \eqref{eq:P(a)} we can write
\begin{align}
	1-F_n(t)&=P\left( T_1>t S_1, \dots, T_{n-1}>tS_{n-1}\right)
	\label{eq:gn}
	= E_S [P_{n-1}(tS_1, \dots, t S_{n-1})],
\end{align}
where $E_{S}$ denotes the average with respect to $S_1,S_2,\ldots, S_{n-1}$.
Inserting Bolshev's recursion \eqref{eq:Bolshev} into \eqref{eq:gn}, one has
\begin{equation}\label{eq:gnRec}
	F_n(t)=\sum_{k=1}^{n-1} \binom{{n-1}}{k} t^k E_S \left[S_{k}^k P_{{n-1}-k}(tS_{k+1},\dots, tS_{n-1})\right]. 
\end{equation}
Isolating the first term in~\eqref{eq:gnRec} and relabeling the summation index on the remaining terms one obtains
\begin{multline}
	\label{eq:FnBis}
	F_n(t)
	= (n-1) t E_S\left[ S_1 P_{n-2}(tS_2,\dots, tS_{n-1})\right]\\
	 +\sum_{k=1}^{n-2}\binom{n-1}{k+1} t^{k+1}E_S\left[ S_{k+1}^{k+1} P_{n-k}(tS_{k+2},\dots, tS_{n-1})\right], 
\end{multline}
We now use Bolshev's recursion again on the first term of \eqref{eq:FnBis}, obtaining:
\begin{multline}
	\label{eq:Fnbisbis}
	F_n(t)=(n-1)t E_S S_1-(n-1)t\sum_{k=1}^{n-2}\binom{n-2}{k} t^k E_S \left[S_1 S_{k+1}^{k} P_{n-2-k}(tS_{k+2},\dots, tS_{n-1})\right]\\
	  + \sum_{k=1}^{n-2}\binom{n-1}{k+1} t^{k+1}E_S\left[ S_{k+1}^{k+1} P_{n-k-2}(tS_{k+2},\dots, tS_{n-1})\right]\\
	= (n-1)t E_S S_1+\sum_{k=1}^{n-2} \binom{n-1}{k+1} t^{k+1}  E_S \left[(S_{k+1}-(k+1)S_1 )S_{k+1}^{k} P_{n-k-2}(tS_{k+2},\dots, tS_{n-1})\right].
\end{multline}
The average in the $k$-th term of the summation is given by
\begin{multline}\notag
	E_S \left[(S_{k+1}-(k+1)S_1) S_{k+1}^{k} P_{n-k-2}(tS_{k+2},\dots,tS_{n-1})\right]\\
	 =n! \int_{0}^{1}\de s_{n-1} \int_{0}^{s_{n-1}}\de s_{n-2} \dots \int_{0}^{s_{2}}\de s_1 [s_{k+1}-(k+1)s_1] s_{k+1}^k P_{n-k-2}(t s_{k+2},\ldots, t s_{n-1}),
\end{multline}
where the integration over $s_1,\dots s_{k+1}$ can be carried out explicitly, yielding
\begin{multline*}
	\int_{0}^{s_{k+2}}\de s_{k+1} \dots \int_{0}^{s_{2}}\de s_1 [s_{k+1}-(k+1)s_1] s_{k+1}^k=\int_0^{s_{k+2}}\de s_{k+1} \left(s_{k+1}\frac{s_{k+1}^{k}}{k!}-(k+1)\frac{s_{k+1}^{k+1}}{(k+1)!}\right)=0.
\end{multline*}
Then, all terms of the summation in \eqref{eq:Fnbisbis} vanish. 
Finally, noting that $E_S S_1=(n-1)!/n!=1/n$, we conclude the proof 
$$
	F_n(t)=(n-1)t E_S S_1=\frac{n-1}{n}t=\left(1-\frac{1}{n}\right)t.
$$
\end{proof}

\begin{proof}[Proof of Proposition~\ref{prop:Dirichlet}] Let $\alpha$ be a positive integer. 
As a special case of the stick-breaking representation in Proposition~\ref{prop:stick}, we have that if $U_{1}\ldots,U_{n\alpha -1}$ are independent random variables uniformly distributed over the interval $[0,1]$, then
the vector $(X_1,\ldots,X_n)$ with components
\begin{align*}
    X_1&=U_{(n\alpha -1,\alpha)}\\
    X_2&=U_{(n\alpha -1,2\alpha)}-U_{(n\alpha -1,\alpha)}\\
    &\vdots\\
    X_{n-1}&=U_{(n\alpha -1,(n-1)\alpha)}-U_{(n\alpha -1,(n-2)\alpha)}\\
X_{n}&=1-U_{(n\alpha -1,(n-1)\alpha)},
\end{align*}
is distributed according to $\operatorname{Dir}(\alpha,\ldots,\alpha)$.
Therefore, using the same notation as in the proof of Theorem~\ref{thm:cumulativeDist},
\begin{multline*}
1-F_n(t)=P\left( T_1>t S_1,T_2>t S_2, \dots, T_{n-1}>tS_{n-1}\right)\\
=P\left(V_{(n\alpha-1,\alpha)}\geq t U_{(n\alpha-1,\alpha)},V_{(n\alpha-1,2\alpha)}\geq t U_{(n\alpha-1,2\alpha)},\ldots,V_{(n\alpha-1,(n-1)\alpha)}\geq t U_{(n\alpha-1,(n-1)\alpha)} \right),
\end{multline*}
where $U_1,\ldots,U_{n\alpha-1},V_1,\ldots,V_{n\alpha-1}$ are independent uniform random variables in the interval $[0,1]$. We can easily upper bound the latter probability
\begin{multline}
  P\left(V_{(n\alpha-1,j)}\geq t U_{(n\alpha-1,j)},\text{for all $j\in\{\alpha,2\alpha,\ldots,(n-1)\alpha\}$}\right)  \\
  \geq 
    P\left(V_{(n\alpha-1,j)}\geq t U_{(n\alpha-1,j)},\text{for all $j\in\{1,2,\ldots,n\alpha-1\}$}\right)
    =1-\frac{n\alpha -1}{n\alpha}t,
\end{multline}
    where the last equality follows from Theorem~\ref{thm:cumulativeDist} applied to pairs of uniform random points in   $\Delta_{\alpha n-1}$.
    \end{proof}

\appendix

\section{Proof of Bolshev's recursion}
\label{app:Bolshev}

We denote by 
	\begin{equation}\label{eq:EDFUnif}
		G_m(t):=\frac{1}{m}\sum_{k=1}^m \bm{1}_{(-\infty,t]}(U_k),
	\end{equation}
	the empirical distribution function of $m$  independent random variables $U_1,\dots,U_m$ uniformly distributed in $[0,1]$.
\par

The probability of the event $\{U_{(m,1)}>a_1,\dots,U_{(m,m)}>a_m\}$ can be found by partitioning the sample space into the following disjoint union of events:
	\begin{equation}\label{eq:partition}
		\{U_{(m,1)}>a_1,\dots,U_{(m,m)}>a_m\}\cup\bigcup_{k=1}^{n}\{U_{(m,k)}\leq a_k,\ U_{(m,j)}>a_j \text{ for all }j>k\}.
	\end{equation}
	Informally, we first isolate the event $\{U_{(m,1)}\leq a_1\}$, then we partition the complementary event according to the position of $U_{(m,2)}$ into $\{U_{(m,1)}>a_1,U_{(m,2)}\leq a_2\}$ and $\{U_{(m,1)}>a_1,U_{(m,2)}>a_2\}$, and we iterate the procedure up to $m$. From \eqref{eq:partition} we obtain
	\begin{align}\notag
		&P_m(a_1,\dots,a_m)=1-\sum_{k=1}^m P\left(\{U_{(m,k)}\leq a_k\}\cap\{ U_{(m,j)}>a_j  \text{ for all }j>k\}\right)\\
		&\qquad=1-\sum_{k=1}^m P(mG_m(a_k)=k)P(mG_m(a_j)\leq j-1  \text{ for all }j>k|mG_m(a_k)=k),\notag
	\end{align}
	where in the last equality we used the empirical spectral distribution \eqref{eq:EDFUnif} and the fact that  $\{U_{(m,k)}\leq a_k\}\cap \{U_{(m,k+1)}>a_{k+1}\}$ with $a_{k+1}>a_k$ selects the event where exactly $k$ among the $m$ random variables $(U_1,\dots,U_m)$ fall within $[0,a_k]$. The corresponding probability is given by the binomial distribution $\operatorname{Bin}(m,a_k)$, so that we have:
	\begin{equation}
		P_m(a_1,\dots,a_m)=1-\sum_{k=1}^m \binom{m}{k}a_k^k(1-a_k)^{m-k}P(mG_m(a_j)\leq j-1  \text{ for all }j>k|mG_m(a_k)=k)
		\label{eq:almostBolshev}
	\end{equation}
	Now notice that 
	\begin{align}\notag
		&P(mG_m(a_j)\leq j-1  \text{ for all }j>k|mG_m(a_k)=k)\\&\qquad=P((m-k)G_{m-k}(a_j)\leq j-1-k  \text{ for all }j>k|(m-k)G_{m-k}(a_k)=0),
	\end{align}
	that is,  the locations of the largest $(m-k)$ among $(U_1,\dots,U_m)$ given that $k$ of them fall below $a_k$ are identically distributed
	to the locations of the order statistics of $(m-k)$ uniform random variables $(V_1,\dots,V_{m-k})$ given that \textit{none} of them falls below $a_k$.
	Note also that $(1-a_k)^{m-k}$ corresponds to the probability that among $m-k$ uniform random variables in $[0,1]$ none of them falls in $[0,a_k]$. Therefore:
	\begin{align}\notag
	&(1-a_k)^{m-k}P(mG_m(a_j)\leq j-1  \text{ for all }j>k|mG_m(a_k)=k)\\ \notag&\qquad=P((m-k)G_{m-k}(a_j)\leq j-1-k  \text{ for all }j>k)\\&\qquad=P(V_{(m-k,1)}>a_{k+1},\dots, V_{(m-k,n-k)}>a_{m})\notag
	\\&\qquad=P_{m-k}(a_{k+1},\dots,a_m),
	\label{eq:KeyRecursion}
	\end{align}
	Substitution of \eqref{eq:KeyRecursion} into \eqref{eq:almostBolshev} yields the claimed recursive relation.

\section{Dirichlet distribution and stick-breaking representation}
\label{app:stick}
The Dirichlet measure with parameters $\alpha_1,\ldots,\alpha_n>0$, denoted $\operatorname{Dir}(\alpha_1,\ldots,\alpha_n)$, is the probability measure on the unit simplex $\Delta_{n-1}$ with density
 \be
\label{eq:density_Dir_gen}
f_{X_1,\ldots,X_n}(x)=\frac{\Gamma(\alpha_1+\cdots+\alpha_n)}{\Gamma(\alpha_1)\cdots\Gamma(\alpha_n)}\prod_{i=1}^nx_i^{\alpha_i-1}\bm{1}_{\Delta_{n}}(x).
 \ee
 For $n>2$, the Dirichlet distribution is a natural multidimensional generalisation of the Beta distribution. 

 When $\alpha_1,\ldots,\alpha_n$ are positive integers, a generalisation of~Proposition~\ref{prop:alternative}  gives a nice and useful representation of a Dirichlet distributed random vector in terms of spacings of order statistics of independent uniform random variables in the interval $[0,1]$.
 \begin{prop}[Stick-breaking representation of the Dirichlet distribution] 
 \label{prop:stick}
     Let $\alpha_1,\ldots,\alpha_n$ be positive integers.  Denote $m_1=\alpha_1$, $m_2=\alpha_1+\alpha_2$, \ldots, $m_{n-1}=\alpha_1+\cdots+\alpha_{n-1}$, and $m=\alpha_1+\cdots+\alpha_n$.  Let $U_1,\ldots,U_{m-1}$ be independent  random variables that are uniformly distributed over the interval $[0,1]$
 and $U_{(m-1,1)}\leq\cdots\leq U_{(m-1,m-1)} $
  the corresponding order statistics. 
 Then, the vector $\left(X_1,X_2,\ldots,X_n\right)$ of the $n$ spacings
  \begin{align*}
  X_1&=U_{(m-1,m_1)}\\
    X_2&=U_{(m-1,m_2)}-U_{(m-1,m_1)}\\
    &\vdots\\
    X_{n-1}&=U_{(m-1,m_{n-1})}-U_{(m-1,m_{n-2})}\\
    X_{n}&=1-U_{(m-1,m_{n-1})}
  \end{align*}
 is distributed according to $\operatorname{Dir}(\alpha_1,\ldots,\alpha_n)$.
 \end{prop}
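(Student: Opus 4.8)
The plan is to give a direct density computation, building on the fact—already recorded in Proposition~\ref{prop:alternative}, which is precisely the case $\alpha_1=\cdots=\alpha_n=1$ of the present statement—that the spacings of uniform order statistics form a uniform point on a simplex, and then to invoke the classical formula for the joint law of a \emph{subfamily} of order statistics. I will also sketch a shorter, more structural argument via the aggregation property of the Dirichlet law.

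\emph{Direct route.} By the classical formula for the joint density of a selected subset of order statistics of $N:=m-1$ i.i.d.\ uniform variables on $[0,1]$ (see, e.g.,~\cite{ArnoldOrder,Hajos54}), the vector
\[
(V_1,\dots,V_{n-1}):=\bigl(U_{(m-1,m_1)},U_{(m-1,m_2)},\dots,U_{(m-1,m_{n-1})}\bigr)
\]
has density, on the region $0\le v_1\le\cdots\le v_{n-1}\le1$,
\[
\frac{(m-1)!}{(\alpha_1-1)!\,(\alpha_2-1)!\cdots(\alpha_n-1)!}\;
v_1^{\alpha_1-1}\,(v_2-v_1)^{\alpha_2-1}\cdots(v_{n-1}-v_{n-2})^{\alpha_{n-1}-1}\,(1-v_{n-1})^{\alpha_n-1},
\]
where we used $m_j-m_{j-1}=\alpha_j$ (with the conventions $m_0:=0$, $m_n:=m$) and $m-1-m_{n-1}=\alpha_n-1$. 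The map $(v_1,\dots,v_{n-1})\mapsto(X_1,\dots,X_{n-1})$ given by $X_1=v_1$, $X_j=v_j-v_{j-1}$ for $2\le j\le n-1$, with $X_n:=1-v_{n-1}$, is linear with lower-bidiagonal matrix having unit diagonal, hence has Jacobian $\pm1$. Pushing the above density forward by this map, and recognizing $(m-1)!=\Gamma(\alpha_1+\cdots+\alpha_n)$ and $(\alpha_j-1)!=\Gamma(\alpha_j)$, we obtain for $(X_1,\dots,X_{n-1})$ the density $\frac{\Gamma(\alpha_1+\cdots+\alpha_n)}{\Gamma(\alpha_1)\cdots\Gamma(\alpha_n)}\prod_{i=1}^n x_i^{\alpha_i-1}$ on $\Delta_{n-1}$ (with $x_n=1-x_1-\cdots-x_{n-1}$), which is exactly~\eqref{eq:density_Dir_gen}.

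\emph{Structural route, and the (non-)obstacle.} Alternatively, apply Proposition~\ref{prop:alternative} with $n$ replaced by $m$: the vector of $m$ consecutive spacings $S_i:=U_{(m-1,i)}-U_{(m-1,i-1)}$, $1\le i\le m$ (with $U_{(m-1,0)}:=0$, $U_{(m-1,m)}:=1$), is a uniform point on $\Delta_{m-1}$, i.e.\ $\operatorname{Dir}(1,\dots,1)$ with $m$ ones. The coordinates $X_1,\dots,X_n$ in the statement are the block sums $X_j=\sum_{i=m_{j-1}+1}^{m_j}S_i$ over consecutive blocks of sizes $\alpha_1,\dots,\alpha_n$; the grouping (aggregation) property of the Dirichlet distribution—immediate from its Gamma construction, since a sum of independent $\mathrm{Gamma}(\cdot,1)$ variables with common rate is again a Gamma variable—then yields $(X_1,\dots,X_n)\sim\operatorname{Dir}(\alpha_1,\dots,\alpha_n)$. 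The only points requiring care in either route are purely bookkeeping: correctly matching block sizes $\alpha_j$ to the gaps $m_j-m_{j-1}$ between the selected ranks, and handling the two boundary coordinates $X_1$ and $X_n$ through the conventions $U_{(m-1,0)}=0$ and $U_{(m-1,m)}=1$. No genuine obstacle arises.
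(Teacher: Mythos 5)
The paper states Proposition~\ref{prop:stick} in Appendix~\ref{app:stick} without any proof, framing it simply as a ``generalisation of Proposition~\ref{prop:alternative}'' and a classical fact; so there is no proof in the paper to compare against. Your write-up therefore supplies something the paper omits, and both of your arguments are correct.

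Of your two routes, the structural one (start from the full vector of $m$ spacings, which by Proposition~\ref{prop:alternative} is uniform on $\Delta_{m-1}$, i.e.\ $\operatorname{Dir}(1,\dots,1)$, then aggregate over consecutive blocks of sizes $\alpha_1,\dots,\alpha_n$) is the one that most faithfully realizes what the paper seems to have in mind when it calls the proposition a generalisation of Proposition~\ref{prop:alternative}: it makes the special case $\alpha_1=\cdots=\alpha_n=1$ the engine of the general statement, at the cost of invoking the Dirichlet aggregation property (or, equivalently, the Gamma representation of the Dirichlet law). Your direct route is a self-contained density computation from the standard joint law of a selected subfamily of order statistics of $m-1$ i.i.d.\ uniforms; the bookkeeping $m_j-m_{j-1}=\alpha_j$ (with $m_0=0$, $m_n=m$) gives exactly the exponents $\alpha_j-1$, the linear change of variables to block spacings has unit Jacobian, and the normalizing constant $\frac{(m-1)!}{\prod_j(\alpha_j-1)!}=\frac{\Gamma(\alpha_1+\cdots+\alpha_n)}{\prod_j\Gamma(\alpha_j)}$ matches~\eqref{eq:density_Dir_gen}. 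Either route is a legitimate and complete proof; the direct one is more elementary and self-contained, while the structural one is shorter and reveals why the integer restriction on the $\alpha_j$'s is the natural hypothesis for the order-statistics formulation.
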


\section*{Acknowledgements}
We are grateful to F. vom Ende for bibliographical remarks, to M. Michelen for precious insights leading to Theorem~\ref{thm:limit}, and to an anonymous referee for valuable suggestions that significantly improved the paper.
\section*{Declarations}
\subsection*{Funding and/or Conflicts of interests/Competing interests} The authors have no competing interests to declare that are relevant to the content of this article.

\subsection*{Data availability} 
Data sharing not applicable to this article as no datasets were generated or analysed during the current study.

\bibliographystyle{alpham}
\bibliography{References}

\newcommand{\etalchar}[1]{$^{#1}$}
\begin{thebibliography}{dOJC{\.Z}K22}

\bibitem[ABN08]{ArnoldOrder}
B.~C. Arnold, N. Balakrishnan, and H.~N. Nagaraja.
\newblock {\em A first course in order statistics}.
\newblock SIAM, (2008).

\bibitem[And53]{Sparre-Andersen53}
E.~S. Andersen, {\em On the fluctuations of sums of random variables}, Math.
  Scand., 263--285, (1953).

\bibitem[AOP16]{alhambra2016fluctuating}
A.~M. Alhambra, J. Oppenheim, and C. Perry, {\em Fluctuating states: What is
  the probability of a thermodynamical transition?}, Phys. Rev. X, {\bfseries
  6}, 041016, (2016).

\bibitem[AU82]{alberti1982stochasticity}
P.~M. Alberti and A. Uhlmann.
\newblock {\em Stochasticity and partial order}.
\newblock Deutscher Verlag der Wissenschaften Berlin, (1982).

\bibitem[Bir46]{Birkhoff}
G. Birkhoff, {\em Tres observaciones sobre el algebra lineal}, Univ. Nac.
  Tucuman, Ser. A, {\bfseries 5}, 147--154, (1946).

\bibitem[CFFG20]{cunden2020}
F.~D. Cunden, P. Facchi, G. Florio, and G. Gramegna, {\em Volume of the set of
  locc-convertible quantum states}, Journal of Physics A: Mathematical and
  Theoretical, {\bfseries 53}, 175303, (2020).

\bibitem[CFFG21]{Cunden2021}
F.~D. Cunden, P. Facchi, G. Florio, and G. Gramegna, {\em Generic aspects of
  the resource theory of quantum coherence}, Phys. Rev. A, {\bfseries 103},
  022401, (2021).

\bibitem[CG16]{chitambar2016comparison}
E. Chitambar and G. Gour, {\em Comparison of incoherent operations and measures
  of coherence}, Phys. Rev. A, {\bfseries 94}, 052336, (2016).

\bibitem[CG19]{chitambar2019quantum}
E. Chitambar and G. Gour, {\em Quantum resource theories}, Rev. Mod. Phys.,
  {\bfseries 91}, 025001, (2019).

\bibitem[DBG15]{Du2015}
S. Du, Z. Bai, and Y. Guo, {\em Conditions for coherence transformations under
  incoherent operations}, Phys. Rev. A, {\bfseries 91}, 052120, (2015).

\bibitem[Dev86]{Devroye1986}
L. Devroye.
\newblock {\em Non-Uniform Random Variate Generation}.
\newblock Springer New York, (1986).

\bibitem[DLP03]{Denuit}
M. Denuit, C. Lef{\`e}vre, and P. Picard, {\em Polynomial structures in order
  statistics distributions}, J. Stat. Plan. Inference, {\bfseries 113},
  151--178, (2003).

\bibitem[DN04]{OrderStatistics}
H.~A. David and H.~N. Nagaraja.
\newblock {\em Order statistics}.
\newblock John Wiley \& Sons, (2004).

\bibitem[dOJC{\.Z}K22]{deoliveirajunior2022}
A. de~ Oliveira~Junior, J. Czartowski, K. {\.Z}yczkowski, and K. Korzekwa, {\em
  Geometric structure of thermal cones}, Phys. Rev. E, {\bfseries 106}, 064109,
  (2022).

\bibitem[HLP52]{HLP}
G. Hardy, J. Littlewood, and G. P{\'o}lya.
\newblock {\em Inequalities}.
\newblock Cambridge University Press, (1952).

\bibitem[HO13]{horodecki2013fundamental}
M. Horodecki and J. Oppenheim, {\em Fundamental limitations for quantum and
  nanoscale thermodynamics}, Nat. Commun, {\bfseries 4}, 2059, (2013).

\bibitem[HR54]{Hajos54}
G. Haj{\'o}s and A. R{\'e}nyi, {\em Elementary proofs of some basic facts
  concerning order statistics}, Acta Math. Acad. Sci. Hungar, {\bfseries 5},
  1--6, (1954).

\bibitem[JKM24]{Jain24}
V. Jain, M. Kwan, and M. Michelen.
\newblock {\em Entangled states are typically incomparable}.
\newblock arXiv:2406.03335, (2024).

\bibitem[Joe90]{joe1990majorization}
H. Joe, {\em Majorization and divergence}, Journal of mathematical analysis and
  applications, {\bfseries 148}, 287--305, (1990).

\bibitem[KC83]{Kostelnikova}
V. Kotel’Nikova and E. Chmaladze, {\em On computing the probability of an
  empirical process not crossing a curvilinear boundary}, Theory of probability
  \& its applications, {\bfseries 27}, 640--648, (1983).

\bibitem[Mal05]{Malamud}
S. Malamud, {\em Inverse spectral problem for normal matrices and the
  {Gauss-Lucas} theorem}, Trans. Am. Math. Soc, {\bfseries 357}, 4043--4064,
  (2005).

\bibitem[Mir59]{Mirsky}
L. Mirsky, {\em On a convex set of matrices}, Archiv der Mathematik, {\bfseries
  10}, 88--92, (1959).

\bibitem[MOA79]{MOA}
A.~W. Marshall, I. Olkin, and B.~C. Arnold.
\newblock {\em Inequalities: theory of majorization and its applications},
  volume 143.
\newblock Springer, (1979).

\bibitem[Nar59]{Narayana59}
T.~V. Narayana, {\em A partial order and its applications to probability
  theory}, Sankhya: Indian J. Stat., 91--98, (1959).

\bibitem[NG15]{narasimhachar2015low}
V. Narasimhachar and G. Gour, {\em Low-temperature thermodynamics with quantum
  coherence}, Nat. Commun., {\bfseries 6}, 1--6, (2015).

\bibitem[Nie99]{Nielsen1999}
M.~A. Nielsen, {\em Conditions for a class of entanglement transformations},
  Phys. Rev. Lett., {\bfseries 83}, 436, (1999).

\bibitem[NV01]{nielsen2001majorization}
M.~A. Nielsen and G. Vidal, {\em Majorization and the interconversion of
  bipartite states}, Quantum Inf. Comput., {\bfseries 1}, 76--93, (2001).

\bibitem[OB19]{oszmaniec2019operational}
M. Oszmaniec and T. Biswas, {\em Operational relevance of resource theories of
  quantum measurements}, Quantum, {\bfseries 3}, 133, (2019).

\bibitem[{OEI}]{OEISA002720}
{OEIS Foundation Inc.}
\newblock {\em The {O}n-{L}ine {E}ncyclopedia of {I}nteger {S}equences}.
\newblock Published electronically at \url{http://oeis.org/A002720}.

\bibitem[PR96]{ParkerRam}
D.~S. Parker and P. Ram.
\newblock {\em Greed and majorization}.
\newblock University of California (Los Angeles). Computer Science Department,
  (1996).

\bibitem[R{\'e}n61]{renyi1961proceedings}
A. R{\'e}nyi, {\em Proceedings of the fourth {Berkeley} symposium on
  mathematical statistics and probability}, Contributions to Biology and
  Problems of Medicine, {\bfseries 4}, 547--61, (1961).

\bibitem[SAP17]{RevModPhys.89.041003}
A. Streltsov, G. Adesso, and M.~B. Plenio, {\em Colloquium: Quantum coherence
  as a resource}, Rev. Mod. Phys., {\bfseries 89}, 041003, (2017).

\bibitem[Shi20]{shiraishi2020two}
N. Shiraishi, {\em Two constructive proofs on d-majorization and
  thermo-majorization}, Journal of Physics A: Mathematical and Theoretical,
  {\bfseries 53}, 425301, (2020).

\bibitem[SP02]{Stanley02}
R.~P. Stanley and J. Pitman, {\em A polytope related to empirical
  distributions, plane trees, parking functions, and the associahedron},
  Discrete \& Computational Geometry, {\bfseries 27}, 603--602, (2002).

\bibitem[SW09]{ShorakWellner}
G.~R. Shorack and J.~A. Wellner.
\newblock {\em Empirical processes with applications to statistics}.
\newblock SIAM, (2009).

\bibitem[vED22]{vomEnde}
F. vom~ Ende and G. Dirr, {\em The d-majorization polytope}, Linear Algebra and
  its Applications, {\bfseries 649}, 152--185, (2022).

\bibitem[vEM24]{ende2022thermomajorization}
F. vom~ Ende and E. Malvetti, {\em The thermomajorization polytope and its
  degeneracies}, Entropy, {\bfseries 26}, 106, (January 2024).

\bibitem[Ver18]{vershynin2018high}
R. Vershynin.
\newblock {\em High-dimensional probability: An introduction with applications
  in data science}, volume~47.
\newblock Cambridge university press, (2018).

\bibitem[Vid99]{vidal1999entanglement}
G. Vidal, {\em Entanglement of pure states for a single copy}, Phys. Rev.
  Lett., {\bfseries 83}, 1046, (1999).

\bibitem[ZMC{\etalchar{+}}17]{zhu2017operational}
H. Zhu, Z. Ma, Z. Cao, S.-M. Fei, and V. Vedral, {\em Operational one-to-one
  mapping between coherence and entanglement measures}, Phys. Rev. A,
  {\bfseries 96}, 032316, (2017).

\end{thebibliography}

\end{document}